\documentclass[pra,twocolumn,nofootinbib]{revtex4-1}

\usepackage{amssymb,amsfonts,amsmath,amsthm}
\usepackage{graphicx}

\newcommand{\ket}[1]{ |{#1}\rangle}
\newcommand{\bra}[1]{ \langle{#1}|}

\newcommand{\ketbra}[2]{| {#1} \rangle\!\langle {#2} |}
\newcommand{\proj}[1]{| {#1} \rangle\!\langle {#1} |}

\newcommand{\Tr}[0]{{\rm Tr}}
\newcommand{\Hil}{\mathcal{H}}

\newcommand{\Id}{\mathbb{I}}
\newcommand{\Lin}{\mathcal{L}}

\newcommand{\JL}{L}

\newcommand{\Jamiolkowski}{Jamio\l kowski }

\newtheorem{theorem}{Theorem}
\newtheorem{definition}[theorem]{Definition}
\newtheorem{lemma}[theorem]{Lemma} 
\newtheorem{corollary}[theorem]{Corollary}

\begin{document}

\title{Implementing positive maps with multiple copies of an input state}

\author{Qingxiuxiong Dong}
 \email{dong@eve.phys.s.u-tokyo.ac.jp}
 \affiliation{Department of Physics, Graduate School of Science, The University of Tokyo, Tokyo, Japan}
\author{Marco T{\'u}lio Quintino}
 \email{quintino@eve.phys.s.u-tokyo.ac.jp}
 \affiliation{Department of Physics, Graduate School of Science, The University of Tokyo, Tokyo, Japan}
\author{Akihito Soeda}
 \email{soeda@phys.s.u-tokyo.ac.jp}
 \affiliation{Department of Physics, Graduate School of Science, The University of Tokyo, Tokyo, Japan}
\author{Mio Murao}
 \email{murao@phys.s.u-tokyo.ac.jp}
 \affiliation{Department of Physics, Graduate School of Science, The University of Tokyo, Tokyo, Japan}

\date{\today}

\begin{abstract}
Valid transformations between quantum states are necessarily described by {\it completely} positive maps, instead of just positive maps.  
Positive but not completely positive maps  such as the transposition map cannot be implemented due to the existence of entanglement in composite quantum systems,  
but there are classes of states for which the positivity is guaranteed, e.g., states not correlated to other systems.
In this paper, we introduce the concept of {\it $N$-copy extension} of maps  to {\it quantitatively} analyze the difference between positive maps and completely positive maps.  
We consider implementations of the action of positive  but not completely positive maps on uncorrelated states by allowing an extra resource of consuming multiple copies of the input state and characterize the positive maps in terms of implementability with multiple copies.   
We show that by consuming multiple copies, the set of implementable positive maps becomes larger, and almost all positive maps are implementable with finite copies of an input state.  
The number of copies of the input state required to implement a positive map quantifies the degree by which a positive map violates complete positivity.
We then analyze the optimal $N$-copy implementability of a noisy version of the transposition map. 

\end{abstract}

\maketitle

\section{introduction}\label{sec:introduction}

Quantum information processing is performed by transformations between quantum states,
and thus which class of transformations are allowed in quantum mechanics 
is a fundamental  issue in quantum information processing. 
In quantum mechanics, valid deterministic transformations are called quantum channels, 
which are  mathematically described as completely positive (CP) and trace preserving (TP) maps~\cite{nielsenchuang,kraus}.
The TP condition can be relaxed by allowing a probability of failure.
 Non-TP maps have been investigated for understanding probabilistic properties of quantum mechanics,
and play a key role in protocols such as quantum teleportation~\cite{teleportation}.
In contrast, for the CP condition of the CPTP maps, it has not been much considered if we can implement non-CP maps by relaxing certain requirement.

In the density operator formalism of quantum mechanics, a general quantum state is described by a density operator, namely, a positive operator with its trace one.   
Valid transformations between states have to preserve the positivity of quantum states, which leads to the positivity requirement of maps.  
Moreover, when a composite system is considered,  the linearity of quantum mechanics extends the action of a map to the composite system.  
Complete positivity of a map is equal to the positivity of that map on a part of a composite system, and it is required for a map to be a valid transformation in quantum mechanics.  
The positivity of a map does not promise its complete positivity, and there exist positive but not completely positive (PNCP) maps. 
 The transposition map is an example of PNCP maps, which plays an important role in entanglement criterion~\cite{ppt1,ppt2}.

\begin{figure}[t]
\includegraphics[width=\hsize]{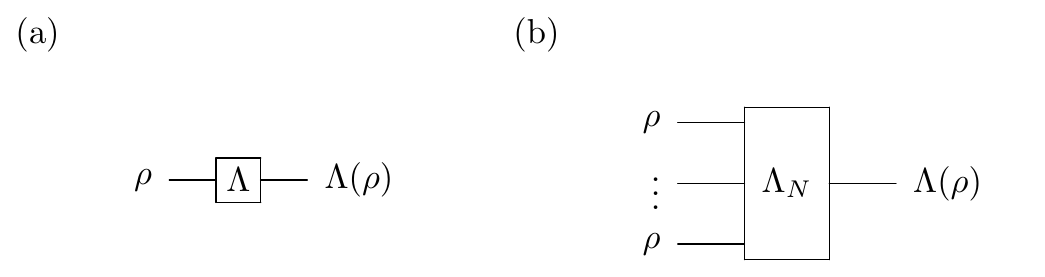}
\caption{
 A positive but not completely positive (PNCP) map $\Lambda$ is not implementable by  the quantum circuit with a single input state $\rho$ represented by (a), but it may be implementable by  another type of the quantum circuit corresponding to $\Lambda_N$ with $N$ copies of an input state represented by (b).   If the quantum circuit of type (b) exists, we say that $\Lambda$ is $N$-copy implementable.}\label{fig:circuit}

\end{figure}

Even if we promise that the input state is not correlated to other quantum systems,
the linearity of quantum mechanics defines the actions for correlated states.
As the linearity of quantum mechanics is one reason that PNCP maps are not implementable,
we consider one relaxation to this condition, that is, adding multiple copies of an input state.   As we will show in this paper, the assumption of adding multiple copies reveals fundamental properties of positive maps.
It also contributes to practical scenarios  where multiple copies of an input state are initially available.
These copies serve as an extra resource as the no-cloning theorem~\cite{no_cloning} forbids producing perfect copies from a single input state.

 A trivial example that multiple copies of an input state help in implementing PNCP maps is the case when infinite copies of an input state are available.   In this case, we can perform state tomography~\cite{tomography_vogel,tomography_paris} on the input state $\rho$, and obtain a classical description of $\rho$.
We then  (classically) calculate the action of $\Lambda$ on $\rho$ using its classical description, 
which results in a classical description of $\Lambda(\rho)$.
Since $\rho$ is a positive operator and $\Lambda$ is a positive map, $\Lambda(\rho)$ is also a positive operator and  we can prepare the quantum state $\Lambda(\rho)$ with normalization.  By employing this  ``measure, estimate, calculate and prepare strategy", any positive map can be implemented for input states not correlated to other systems if infinite copies of the input state  are provided.

However, if only finite copies of an input state are available, how does the implementability of PNCP maps change?
It is shown that if only finite copies are available, the transposition map is not implementable in an exact manner, even probabilistically\footnote{Although Ref.~\cite{certifying} does not state this result explicitly, it follows from the section 6 of the appendix that the optimal approximation of this probabilistic transformation cannot have average fidelity equals one, hence it can never be exact. }~\cite{certifying,miyazaki}.
That is, in order to implement the transposition map, infinite copies of an input state are necessary.
On the other hand, this result does not exclude the possibility of implementing other PNCP maps with finite copies,
and it is not known which class of maps become implementable when finite copies are available.

For the finite copy case,  many of preceding studies focus on evaluating optimal approximate implementations of  maps, such as universal NOT~\cite{universal_not} and quantum cloning~\cite{cloning_gisin,cloning_werner}.    In contrast, we focus on {\it exact} implementations  in this paper, as the evaluation of approximation is not unique 
 (e.g., average fidelity, diamond norm~\cite{ksv,buscemi03,watrous}, or structural physical approximation~\cite{horodecki01,psa}).
The requirement of  the exact  implementation indicates that the protocols  in the type of measure, estimate, calculate and prepare strategy does not help, as the result is always approximate.

We introduce the concept of {\it $N$-copy extension} of a map.
We analyze the implementability of the action of a PNCP map $\Lambda$ 
when $N$ copies of an input state $\rho$,
which is not correlated to other systems, are given.
Mathematically, we consider the existence of {\it CP $N$-copy extension of map}, a CP map $\Lambda_N$ 
which maps $N$ copies of $\rho$, that is $\rho^{\otimes N}$, to a single state $\Lambda(\rho)$,
namely, $\Lambda_N(\rho^{\otimes N}) = \Lambda(\rho)$ for every $\rho$.
 We figure out how the set of implementable maps changes when the number of copies, $N$, changes.

The problem of CP $N$-copy extension can then be regarded as a kind of the CP extension problem,
which has been studied for two-dimensional  systems~\cite{extension_twodim}, pure states in general dimensional systems with the Gram matrices~\cite{extension_pure}.
For general states with an arbitrary dimension, this problem can be tackled by semidefinite programming (SDP) methods~\cite{extending_qmaps}.
In this work we do not make use of SDP methods. 
Instead, we show that the CP $N$-copy extension problem can be solved by evaluating the smallest eigenvalue of an operator,  which is considerably simpler than solving a general SDP problem or the SDP formulation presented in
Ref.~\cite{extending_qmaps}.

We show that almost all PNCP maps become implementable if enough  but still finite copies of an input state are provided.   The number of copies required for implementing a positive map can be regarded as a measure of how difficult it is to implement the map.

Positive maps have a one-to-one correspondence to entanglement witness~\cite{ppt2},
and characterizations of positive maps based on entanglement witness have been studied in~\cite{separability}.
Our results on $N$-copy implementable maps give quantitative bound to the hierarchies in~\cite{separability}.
Structural physical approximation~\cite{horodecki01,psa} provides a method for quantifying the difficulty for implementing positive maps restricted to a single copy of the input state.
The idea of $N$-copy implementability presented in this paper provides a different quantification for the difficulty,
the number of copies of the input state which is required to implement a positive map.

This paper is organized as follows.  
In Section~2, we introduce the concept of $N$-copy extension.
In Section~3, we give a universal construction of $N$-copy extension. 
This construction is shown to be an optimal construction,
and $N$-copy implementability of a map is verifiable by only considering this construction.
While this result fully characterizes $N$-copy implementability, 
we give two sufficient and one necessary conditions for $N$-copy implementability in Section~4 and 5, 
which are simpler to verify.
In Section~4, we consider two noisy versions of a positive map,
and we give sufficient conditions for the noisy versions to be $N$-copy implementable.
The result also shows that all non-boundary positive maps are implementable with finite copies of an input state.
In Section~5, a necessary condition for a positive map to be $N$-copy implementable is given.
In Section~6, we consider the boundary and extremal conditions of a positive map with a few examples.
In Section~7, we analyze the $N$-copy implementability of the noisy version of the transposition map in general dimension.
In Section~8, we discuss a characterization of positive maps based on $N$-copy implementability.

\section{Implementing maps with multiple copies of an input state}

Let $\mathcal{D}(\Hil)$ be the set of density operators on the Hilbert space $\Hil$.
A linear map $\Lambda : \Lin(\Hil_1) \to \Lin(\Hil_0)$ is said to be implementable in quantum mechanics 
if there exists a quantum circuit such that it (probabilistically) transforms any input state $\rho \in \mathcal{D}(\Hil_1)$ 
into $\Lambda(\rho) \in \mathcal{D}(\Hil_0)$ as shown in Fig.~\ref{fig:circuit} (a).   
Such a quantum circuit exists if $\Lambda$ is a CP map.

We consider a situation where $N$ copies of an input state $\rho$ are provided to  perform a task to prepare a single state given by $\Lambda(\rho)$.    We say that $\Lambda$ is implementable with $N$ copies of an input, or simply \emph{$N$-copy implementable},
when there exists a quantum circuit that transforms any input state $\rho^{\otimes N}$ into $\Lambda(\rho)$ as shown in Fig.~\ref{fig:circuit}(b).

The quantum circuit  represented by Fig.~\ref{fig:circuit}(b) 
 satisfies the following properties.  The input quantum system of the quantum circuit is not just $\Hil_1$ 
but $N$ copies of $\Hil_1$.   
In order to distinguish all input quantum systems, we define  $\Hil_i \cong \mathbb{C}^{d_1}$ for $i = 2,\ldots,N$, where $d_1 = \dim \Hil_1$.
The subscripts of vectors and operators also denote the Hilbert spaces in or on which they are defined, e.g., $\ket{\phi}_j \ket{\psi}_i = \ket{\psi}_i \ket{\phi}_j$, and the subscripts are omitted if trivial from the context.
The total input quantum system is $\Hil_1 \otimes \cdots \otimes \Hil_N$,
and the extended map is a linear map from $\Lin(\Hil_1 \otimes \Hil_2 \otimes \cdots \otimes \Hil_N)$ to $\Lin(\Hil_0)$,
which is denoted by $\Lambda_N$.
We denote the Hilbert space $\Hil_1 \otimes \Hil_2 \otimes \cdots \otimes \Hil_N$ as $\Hil_{12\cdots N}$ for convenience.
Our aim is to find an implementation of a map that transforms a state 
$\rho$ to $\Lambda(\rho)$ with $N$ copies of $\rho$.
Thus, a requirement for $\Lambda_N$ is that 
$\Lambda_N$ maps $\rho^{\otimes N} \in \Lin(\Hil_{12\cdots N})$ to $\Lambda(\rho)$.
If this condition is satisfied, we call the map $\Lambda_N$ as an {\it $N$-copy extension of $\Lambda$},
and if $\Lambda_N$ is CP, we call it as a {\it CP $N$-copy extension of $\Lambda$}.
The $N$-copy implementability of $\Lambda$ is  equivalent  to the existence of a CP $N$-copy extension of $\Lambda$.

\begin{definition}[$N$-copy extension]\label{def:n_copy_extension}
Given a linear map 
$
\Lambda :\Lin(\Hil_1) \to  \Lin(\Hil_0),
$
a linear map
$
\Lambda_N : \Lin(\Hil_1 \otimes \Hil_2 \otimes \cdots \otimes \Hil_N) \to \Lin(\Hil_0)
$
is an $N$-copy extension of $\Lambda$ if $\Lambda_N$ satisfies 
\begin{align}
\Lambda_N (\rho^{\otimes N}) = \Lambda( \rho ) \label{eq:def_n_copy_extension}
\end{align}
 for any quantum state $\rho \in \mathcal{D}(\mathbb{C}^{d_1})$.
 When the $N$-copy extension is CP,
 it is a CP $N$-copy extension.
\end{definition}

Due to the linearity, the action of $\Lambda_N$ is not only determined on $\rho^{\otimes N}$, 
but on the linear span of $\rho^{\otimes N}$,
where the linear span is defined as 
\begin{align}
\mathrm{span}\{ \rho^{\otimes N} \} := 
\left\{ \sum_i c_i \rho_i^{\otimes N} \middle| c_i \in \mathbb{C}, \rho_i \in \mathcal{D}(\mathbb{C}^{d_1}) \right\}. \label{eq:span_rho}
\end{align}
An element in the linear span $ {O}_N \in \mathrm{span}\{ \rho^{\otimes N} \}$ can be decomposed as
$
{O}_N = \sum_i c_i \rho_i^{\otimes N},
$
and the action of $\Lambda_N$ on ${O}_N$  can be evaluated with linearity and Eq.~\eqref{eq:def_n_copy_extension} as
\begin{align}
\Lambda_N ( {O}_N ) &= 
 \sum_i c_i \Lambda_N( \rho_i^{\otimes N} ) 
= \sum_i c_i \Lambda( \rho_i) \notag \\
&= \Lambda (\sum_i c_i \rho_i) 
= \Lambda ( \Tr_{2,\ldots,N} {O}_N ). \label{eq:calc_by_partial_trace}
\end{align}

Note that for an operator $a \in \Lin(\mathbb{C}^{d_1})$ which is not a density operator,
even if $a^{\otimes N} \in \mathrm{span}\{ \rho^{\otimes N} \}$, 
the condition $\Lambda_N( a^{\otimes N} ) = \Lambda(a)$ is not guaranteed.
For example, consider $a =\ketbra{0}{1}$,
the action of $\Lambda_N$ on $\ketbra{0}{1}^{\otimes N}$ is not $\Lambda(\ketbra{0}{1})$ but $0$,
as the partial trace over $\Hil_2,\ldots,\Hil_N$ of $\ketbra{0}{1}^{\otimes N}$ is $0$, namely,
\begin{align}
\Lambda_N ( a^{\otimes N} ) = \Lambda ( \Tr_{2,\ldots,N}\, a^{\otimes N} ) = \Lambda (0) = 0.
\end{align}

The problem of CP N-copy extension is a particular case of the CP extension problem, 
the objective of which is to decide for a given map defined on a subspace whether there exists a CP map defined on the full space preserving the action on the subspace. 
In particular, for a given set of input states $\{ \rho_i\}$ and a set of output states $\{ \sigma_i \}$, the CP extension problem is to verify the existence of a CP map or sometimes a quantum channel (a CPTP map)  that maps $\rho_i$ into $\sigma_i$ for every $i$.   
For the CP $N$-copy extension problem, 
the set of input states is $\{ \rho^{\otimes N} \}$ and the set of output states is $\{ \Lambda(\rho) \}$.    
The CP extension problem has been studied for two-dimensional  systems in~\cite{extension_twodim} and for pure states in general  dimensional systems in~\cite{extension_pure} using  the Gram matrices.   
A recent study for general states in general dimensional systems~\cite{extending_qmaps} shows that the CP extension problem can be formulated as an SDP.
 The problem of CP $N$-copy extension is a subclass of this general extension problem.

\section{$N$-copy Implementable maps}\label{sec:n_copy_implementability}

We present a characterization of $N$-copy CP extensible maps based on the positivity of an operator in this section.
That is, we show an universal construction of $\Lambda_N$,  and the $N$-copy implementability of $\Lambda$ can be verified by analyzing this construction.  
This is achieved by noticing that the subspace in our problem satisfies permutation invariance under permutations of the input Hilbert spaces.

\begin{theorem}\label{thm:optimality_of_sym_ext}
There exists a CP $N$-copy extension of $\Lambda$ if and only if
the $N$-copy extension $\Lambda^{sym}_N$ is CP,
where
\begin{align}
\Lambda^{sym}_N ( \rho_1 \otimes \rho_2 \otimes \cdots \otimes \rho_N ) 
:= \frac{1}{N} \sum_{i=1}^N \Lambda(\rho_i) \prod_{j \neq i} \Tr_j \rho_j ,\label{eq:n_copy_extension}
\end{align}
for all $\rho_i \in \Lin(\Hil_i)$.
Moreover, if $\Lambda$ is TP, $\Lambda_N^{sym}$ is also TP.
\end{theorem}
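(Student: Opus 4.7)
The plan is to prove the theorem by symmetrizing an arbitrary CP $N$-copy extension. A direct substitution $\rho_1 = \cdots = \rho_N = \rho$ in Eq.~\eqref{eq:n_copy_extension} yields $\Lambda_N^{sym}(\rho^{\otimes N}) = \frac{1}{N}\sum_{i=1}^N \Lambda(\rho) = \Lambda(\rho)$, so $\Lambda_N^{sym}$ is indeed an $N$-copy extension of $\Lambda$. The ``if'' direction of the equivalence is therefore immediate: if $\Lambda_N^{sym}$ is CP, it is already the desired CP $N$-copy extension.

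For the ``only if'' direction, suppose $\Lambda_N$ is some CP $N$-copy extension of $\Lambda$. Let $P_\pi$ denote the unitary on $\Hil_{12\cdots N}$ that permutes the input factors according to $\pi \in S_N$, and define the symmetrized map
\begin{align}
\bar{\Lambda}_N(X) := \frac{1}{N!} \sum_{\pi \in S_N} \Lambda_N(P_\pi X P_\pi^\dagger).
\end{align}
Each summand is a composition of the CP map $\Lambda_N$ with a unitary conjugation, so $\bar{\Lambda}_N$ is a convex combination of CP maps and hence CP. Since $\rho^{\otimes N}$ is invariant under every $P_\pi$, we also have $\bar{\Lambda}_N(\rho^{\otimes N}) = \Lambda(\rho)$, so $\bar{\Lambda}_N$ is again an $N$-copy extension. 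A standard relabeling argument in the sum shows $\bar{\Lambda}_N(P_\sigma X P_\sigma^\dagger) = \bar{\Lambda}_N(X)$ for every $\sigma \in S_N$.

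The key step is identifying $\bar{\Lambda}_N$ with $\Lambda_N^{sym}$. Averaging the permutation invariance over $\sigma$ gives $\bar{\Lambda}_N(X) = \bar{\Lambda}_N(X_{sym})$, where $X_{sym} := \frac{1}{N!}\sum_\pi P_\pi X P_\pi^\dagger$ is symmetric under permutation of tensor factors. Being a symmetric tensor in $\Lin(\mathbb{C}^{d_1})^{\otimes N}$, $X_{sym}$ lies in $\mathrm{span}\{\rho^{\otimes N}\}$, so by Eq.~\eqref{eq:calc_by_partial_trace} every $N$-copy extension evaluates on it as $\Lambda(\Tr_{2,\ldots,N} X_{sym})$. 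A direct calculation on a product input $X = \rho_1 \otimes \cdots \otimes \rho_N$ gives $\Tr_{2,\ldots,N} X_{sym} = \frac{1}{N}\sum_i \rho_i \prod_{j\neq i}\Tr \rho_j$, and applying $\Lambda$ reproduces exactly Eq.~\eqref{eq:n_copy_extension}. By linearity this extends to all $X$, so $\Lambda_N^{sym} = \bar{\Lambda}_N$ is CP. For the final claim, if $\Lambda$ is TP then $\Tr \Lambda_N^{sym}(\rho_1 \otimes \cdots \otimes \rho_N) = \frac{1}{N}\sum_i \Tr \rho_i \prod_{j\neq i}\Tr \rho_j = \prod_j \Tr \rho_j$, which is $\Tr(\rho_1 \otimes \cdots \otimes \rho_N)$, and this extends to every operator by linearity. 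The main obstacle is justifying why the symmetrization must match the explicit formula in Eq.~\eqref{eq:n_copy_extension}; this hinges on the uniqueness of any $N$-copy extension on the symmetric subspace $\mathrm{span}\{\rho^{\otimes N}\}$ together with the permutation invariance of $\bar{\Lambda}_N$.
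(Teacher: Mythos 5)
Your proof is correct and follows essentially the same route as the paper: symmetrize an arbitrary CP $N$-copy extension over $S_N$, observe the average is CP (you via a convex combination of CP maps, the paper via a minimum-eigenvalue bound on the averaged Choi operator --- the same fact), and identify the result with $\Lambda_N^{sym}$ because permutation-symmetrized operators lie in $\mathrm{span}\{\rho^{\otimes N}\}$, where every extension is forced to act as $\Lambda\circ\Tr_{2,\ldots,N}$. The one step you assert without proof --- that $X_{sym}\in\mathrm{span}\{\rho^{\otimes N}\}$ --- is exactly what the paper establishes with a polarization/derivative argument ($\sum_{\pi}\rho_{\pi(1)}\otimes\cdots\otimes\rho_{\pi(N)}=\partial^N_{c_1\cdots c_N}(\sum_i c_i\rho_i)^{\otimes N}$), so it is a standard fact rather than a gap.
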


Note that $\Lambda^{sym}_N$ is an $N$-copy extension of $\Lambda$, 
that is, if we set $\rho_i = \rho $ for $i = 1,\ldots,N$, we obtain
\begin{align}
\Lambda^{sym}_N ( \rho^{\otimes N} ) 
= \frac{1}{N} \sum_{i=1}^N \Lambda(\rho) = \Lambda(\rho).
\end{align}

In order to verify if a linear map is CP, we utilize the Choi-\Jamiolkowski isomorphism~\cite{positive_maps}.
This isomorphism maps a linear map $\Lambda : \Lin(\Hil_1) \to \Lin(\Hil_0)$ to an operator $L \in \Lin(\Hil_1 \otimes \Hil_0)$
as $L := \sum \ketbra{i}{j} \otimes \Lambda(\ketbra{i}{j})$.
$L$ is called as the Choi operator of $\Lambda$.
The definition of $L$ is also equivalent to $L = id \otimes \Lambda( d_1 \proj{\Phi^+} )$ where 
$\ket{\Phi^+} := (1/ \sqrt{d_1}) \sum_{i=0}^{d_1 - 1} \ket{i}\ket{i}$ is the maximally entangled state on $\Hil_1 \otimes \Hil_1$.
Since  complete positivity of a map is equivalent to the positivity of the corresponding Choi operator,
we can verify  complete positivity of a map by calculating the smallest eigenvalue of its Choi operator.

Theorem~\ref{thm:optimality_of_sym_ext} states that for a given positive map $\Lambda$,
its $N$-copy implementability is equal to the  complete positivity of $\Lambda^{sym}_N$.
Let $L$ be the Choi operator of $\Lambda$, 
the  complete positivity of $\Lambda^{sym}_N$ is equivalent to the positivity of the corresponding Choi operator 
$\JL^{sym}_N \in \Lin(\Hil_{01\cdots N})$ given by
\begin{align}
\JL^{sym}_N = \frac{1}{N} \sum_{i=1}^N (L)_{0i} \otimes \Id_{R}, \label{eq:n_copy_extension_choi}
\end{align}
where the  subscripts $0,i$ on the r.h.s.~denotes the Hilbert space $\Hil_0,\Hil_i$ and $R$ denotes the rest.
Thus, Theorem~\ref{thm:optimality_of_sym_ext} is equivalent to that 
a positive map $\Lambda$ is $N$-copy implementable if and only if 
its corresponding $\JL^{sym}_N$ defined as Eq.~\eqref{eq:n_copy_extension_choi} is positive.
Various properties of the operator which has a symmetrized structure as in Eq.~\eqref{eq:n_copy_extension_choi} have been studied~\cite{separability,math_entanglement,lancien2016k}.
In this paper, we focus on the non-negativity of the minimum eigenvalue of the operator in Eq.~\eqref{eq:n_copy_extension_choi}, which corresponds to the existence of a CP $N$-copy extension.

While a characterization of $N$-copy implementability can be formulated as an SDP~\cite{extending_qmaps},
we show that $N$-copy implementability of a map $\Lambda$ is decidable by calculating the smallest eigenvalue of the operator given by Eq.~\eqref{eq:n_copy_extension_choi},
which is considerably simpler than solving an SDP.
If the $N$-copy implementability problem is formulated as an SDP,  our construction of $\Lambda_N$ always corresponds to the candidate of the optimal solution of the SDP,   
and thus we only need to see if this construction satisfies the SDP condition to determine if there exists a solution to this SDP.

Theorem~\ref{thm:optimality_of_sym_ext} can also be used to show the positivity of a class of maps.
Since any $N$-copy implementable map is necessarily a positive map, 
we can verify the positivity of a map if it is already $N$-copy implementable.
A characterization of positive maps in terms of $N$-copy implementability is discussed in Section~\ref{sec:characterization}.

\begin{proof}[Proof of Theorem~\ref{thm:optimality_of_sym_ext}]
The ``if part'' is trivial, and we show the ``only if part''.
Let $\Lambda_N$ be an $N$-copy extension of $\Lambda$,
and the Choi operators of $\Lambda_N$ and $\Lambda^{sym}_N$ be 
$\JL_N$ and $\JL^{sym}_N$, respectively.
Since $\Lambda_N (\Lambda_N^{sym})$ is CP if and only if $\JL_N (\JL_N^{sym}) $ is positive, 
it is enough to show that $\JL^{sym}_N \geq 0$ holds when $\JL_N \geq 0$.

Consider the permutations $\pi$ of $N$ input states, namely, $\Lambda_N \circ \pi (\rho_1 \otimes \cdots \otimes \rho_N) 
= \Lambda_N (\rho_{\pi(1)} \otimes \cdots \otimes \rho_{\pi(N)})$,
and the average of them $\mathrm{Sym}(\Lambda_N) = \frac{1}{N!} \sum_{\pi \in S_N} \Lambda_N \circ \pi$.
For any $\rho_i$, we have
\begin{align}
&\frac{1}{N!} \sum_{\pi \in S_N} \Lambda_N \circ \pi (\rho_1 \otimes \cdots \otimes \rho_N)  \\
&= \Lambda_N ( \frac{1}{N!} \sum_{\pi \in S_N} \rho_{\pi(1)} \otimes \cdots \otimes \rho_{\pi(N)}).
\end{align}
Here,
$\frac{1}{N!} \sum_{\pi \in S_N} \rho_{\pi(1)} \otimes \cdots \otimes \rho_{\pi(N)} \in \mathrm{span}\{ \rho^{\otimes N}\}$,
which can be derived by a similar argument to the proof of Theorem 3 in~\cite{church_of_sym} as follows.
Let $R( c_1, c_2, \ldots, c_N ) = \sum_{i=1}^N c_i \rho_i$ with $c_i \in \mathbb{R}$ for $i = 1,\ldots,N$.
Then $R( c_1, c_2, \ldots, c_N )^{\otimes N} \in \mathrm{span}\{ \rho^{\otimes N}\}$ holds by definition.
 Notice that $\sum_{\pi \in S_N} \rho_{\pi(1)} \otimes \cdots \otimes \rho_{\pi(N)} 
= \frac{\partial^N}{\partial c_1 \partial c_2 \cdots \partial c_N} R( c_1, c_2, \ldots, c_N )^{\otimes N}$,
proving that $\frac{1}{N!} \sum_{\pi \in S_N} \rho_{\pi(1)} \otimes \cdots \otimes \rho_{\pi(N)} \in \mathrm{span}\{ \rho^{\otimes N}\}$.
Thus, the action of $\mathrm{Sym}(\Lambda_N) $ on $\rho_1 \otimes \cdots \otimes \rho_N$ can be calculated invoking Eq.~\eqref{eq:calc_by_partial_trace}, and is uniquely determined by Eq.~\eqref{eq:def_n_copy_extension}, 
i.e., $\mathrm{Sym}(\Lambda_N) = \mathrm{Sym}(\Lambda'_N)$ for any $\Lambda'_N$ satisfying Eq.~\eqref{eq:def_n_copy_extension}.
Since $\Lambda^{sym}_N$ satisfies Eq.~\eqref{eq:def_n_copy_extension}, 
$\mathrm{Sym}(\Lambda_N) $ can be calculated by taking $\Lambda^{sym}_N$ in place of $\Lambda_N$, 
 and we obtain $\mathrm{Sym}(\Lambda_N) = \frac{1}{N!} \sum_{\pi \in S_N} \Lambda^{sym}_N \circ \pi = \Lambda^{sym}_N$. 
 In other words, 
$
 \frac{1}{N!} \sum_{\pi \in S_N} \Lambda_N \circ \pi = \Lambda^{sym}_N \label{eq:lambda_sym_from_lambda_n}
$
 holds for any $\Lambda_N$ satisfying Eq.~\eqref{eq:def_n_copy_extension}.
The Choi operator of $\Lambda_N\circ\pi$ is given by 
$( id \otimes \Lambda_N \circ \pi) ( d_1^N \proj{\Phi^+}) = (\pi^{-1} \otimes \Lambda_N) (d_1^N \proj{\Phi^+}) 
= (\pi^{-1}\otimes id) (\JL_N)$,
and the corresponding eigenvalues are the same as the eigenvalues of $\Lambda_N$ for any $\pi$.
Thus, the smallest eigenvalue of the Choi operator of $\Lambda^{sym}_N$ is bounded as
\begin{align}
\lambda_{\min} (\JL^{sym}_N) &= \lambda_{\min} \left[ \frac{1}{N!}  \sum_{\pi \in S_N} (\pi^{-1}\otimes id) (\JL_N) \right] \\
&\geq \lambda_{\min} ( \JL_N ) .
\end{align}
Therefore, for $\JL_N \geq 0$, we obtain $\JL^{sym}_N \geq 0$.
\end{proof}

In the  rest of this paper except for Section~\ref{sec:no_n_copy_extension}, 
we use $\Lambda_N$ to denote the specific $N$-copy extension of $\Lambda$ given by Eq.~\eqref{eq:n_copy_extension},
and $\JL_N$ the corresponding Choi operator given by Eq.~\eqref{eq:n_copy_extension_choi}.

\section{Noisy version of positive maps}\label{sec:implement_n_copy_extension}

While Theorem~\ref{thm:optimality_of_sym_ext} gives a method to verify the $N$-copy implementability of a map,
it does not show how the set of $N$-copy implementable maps changes depending on $N$.
One way to capture the change of the set of implementable maps in $N$ 
is to analyze a set of maps written in a certain form.
For example, the implementability of a noisy version of two dimensional state transposition defined as
\begin{align}
T^\eta (\rho) = (1-\eta) \rho^T + \eta \Id / 2
\end{align}
can be  obtained by using the results of~\cite{universal_not}.
 When $N$ copies of an input state are available, this  noisy version of the map is implementable if and only if $\eta \geq 2/(N+2)$.

In this section, we consider the implementability of a wider class of maps given by 
\begin{align}
\Lambda^{\eta_a} (\rho) := (1-\eta_a) \Lambda(\rho) + \eta_a \frac{\Tr\JL}{d_1}  \frac{\Id_0}{d_0} \Tr(\rho), \label{eq:white_noise_robustness}
\end{align}
where $\Lambda :\Lin(\Hil_1) \to  \Lin(\Hil_0)$ is an arbitrary positive map,
and $\JL$ is the Choi operator of $\Lambda$. 
 The parameter $\eta_a$ may be seen as the amount of white noise that is added to the original map $\Lambda$.
When $N$ copies of an input state are available, we give a general bound on $\eta_a$ independent of $\Lambda$ 
(Theorem~\ref{thm:n_copy_extension_add_white_noise}).
Note that while $\Lambda^{\eta_a}$ can be regarded as an approximation of $\Lambda$,
what we show is the exact implementability of $\Lambda^{\eta_a}$.

We can also consider another kind of a noisy version of the map by introducing a decomposition given by 
\begin{align}
\Lambda^{\eta_b} (\rho) := (1-\eta_b) \Lambda(\rho) + \eta_b \Lambda( \frac{\Id_1}{d_1} \Tr(\rho) ). \label{eq:depolarizing_first}
\end{align}
In Lemma~\ref{lem:navascues_ext}, another bound is derived for this decomposition.
While Eq.~\eqref{eq:white_noise_robustness} is the convex sum of $\Lambda$ 
and the  renormalized totally depolarizing channel,
Eq.~\eqref{eq:depolarizing_first} can be interpreted as a map that first depolarize the state, 
and then apply $\Lambda $ on the depolarized state, 
that is, $\Lambda^{\eta_b} = \Lambda \circ ((1-\eta_b) id + \eta_b  \frac{\Id}{d_1} \Tr)$.
We also give a general bound on $\eta_b$ independent of $\Lambda$ in Theorem~\ref{thm:implementable_eta}.
Note that Eq.~\eqref{eq:depolarizing_first}  coincides with Eq.~\eqref{eq:white_noise_robustness} when $\Lambda$ is unital,
and in this case, Theorem~\ref{thm:implementable_eta} gives a better bound.

\begin{theorem}\label{thm:n_copy_extension_add_white_noise}
For any positive map 
$
\Lambda :\Lin(\Hil_1) \to  \Lin(\Hil_0),
$
its noisy version $\Lambda^{\eta_a} (\rho)$ defined as Eq.~\eqref{eq:white_noise_robustness} is $N$-copy implementable
if $\eta_a \geq d_0 d_1^2 / (N + d_0 d_1^2) $, where $d_i = \dim \Hil_i$.
\end{theorem}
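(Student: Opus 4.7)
The plan is to invoke Theorem~\ref{thm:optimality_of_sym_ext} to reduce the $N$-copy implementability of $\Lambda^{\eta_a}$ to the positivity of the Choi operator of the symmetric $N$-copy extension $(\Lambda^{\eta_a})^{sym}_N$. Using linearity of the symmetrization in Eq.~\eqref{eq:n_copy_extension} and of the convex combination defining $\Lambda^{\eta_a}$, one directly computes
\begin{equation*}
\JL^{sym,\eta_a}_N = (1 - \eta_a)\,\JL^{sym}_N + \eta_a\,\frac{\Tr \JL}{d_0 d_1}\,\Id,
\end{equation*}
where $\JL^{sym}_N$ is given by Eq.~\eqref{eq:n_copy_extension_choi} and $\Id$ is the identity on $\Hil_0 \otimes \Hil_1^{\otimes N}$. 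Since $\Id$ is positive and commutes with $\JL^{sym}_N$, the condition $\JL^{sym,\eta_a}_N \geq 0$ is equivalent to the spectral bound $\lambda_{\min}(\JL^{sym}_N) \geq -\eta_a \Tr \JL /[(1-\eta_a) d_0 d_1]$.

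Substituting $\eta_a = d_0 d_1^2/(N + d_0 d_1^2)$ one finds $\eta_a/(1-\eta_a) = d_0 d_1^2/N$, so it suffices to show the universal lower bound
\begin{equation*}
\lambda_{\min}(\JL^{sym}_N) \;\geq\; -\frac{d_1\,\Tr \JL}{N},
\end{equation*}
valid for every positive map $\Lambda$ (note that $\Tr \JL = \Tr \Lambda(\Id_1) \geq 0$ since $\Lambda$ is positive). Multiplying by $N$, this is equivalent to showing $\sum_i \JL_{0i}\otimes\Id_R + d_1 \Tr \JL \cdot \Id \geq 0$ on $\Hil_0\otimes\Hil_1^{\otimes N}$ whenever $\JL$ is the Choi of a positive map, i.e.\ a block-positive operator.

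The main technical obstacle is proving this last inequality. My approach would be to exhibit the operator $\JL^{sym,\eta_a}_N$ as the Choi operator of an explicitly constructible CP map. A natural candidate is built from a measure-and-prepare strategy: use a covariant symmetric POVM such as $\{d_+^{(N)} \proj{\phi}^{\otimes N} d\phi\}$ on $\Hil_1^{\otimes N}$, and upon outcome $\ket\phi$ prepare a convex combination of $\Lambda(\proj{\phi})$ with the maximally mixed state proportional to $\Id_0$; such a map is manifestly CP because $\Lambda$ is positive, so each $\Lambda(\proj\phi)$ is a legitimate (unnormalised) output state. Using the well-known formula $d_+^{(N)} \int d\phi\, \langle\phi|\psi\rangle^{2N}\proj{\phi} = (N\proj\psi + \Id_1)/(N+d_1)$, this strategy reproduces $\Lambda^{\eta_a}(\rho)$ on every pure $\rho$. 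The delicate part, and the main difficulty, is to verify that the same construction---possibly after combining with the fully depolarizing channel $\rho \mapsto \tfrac{\Tr \JL}{d_0 d_1}\Id_0$ in a carefully tuned mixture---yields exactly $\Lambda^{\eta_a}(\rho)$ for all mixed inputs and for non-unital $\Lambda$, where extraneous contributions proportional to $\Lambda(\Id_1)$ need to be absorbed into the white-noise term; matching coefficients in this calculation is what ultimately forces the bound $\eta_a \geq d_0 d_1^2/(N + d_0 d_1^2)$.
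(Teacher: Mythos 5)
Your reduction is correct and matches the paper's quantitative content exactly: by Theorem~\ref{thm:optimality_of_sym_ext} and the linearity of Eq.~\eqref{eq:n_copy_extension_choi}, the claim is equivalent to the universal eigenvalue bound $\lambda_{\min}(\JL^{sym}_N)\geq -d_1\Tr\JL/N$ for every block-positive $\JL$. But you do not prove this bound, and the construction you propose for it does not work. The covariant measure-and-prepare map reproduces the target only on \emph{pure} inputs: the identity $d_+^{(N)}\int d\phi\,|\braket{\phi}{\psi}|^{2N}\proj{\phi}=(N\proj{\psi}+\Id_1)/(N+d_1)$ has no analogue for mixed $\rho^{\otimes N}$, where $d_+^{(N)}\int d\phi\,\bra{\phi}^{\otimes N}\rho^{\otimes N}\ket{\phi}^{\otimes N}\proj{\phi}$ depends on all moments $\Tr\rho^k$ and not just on $\rho$ and $\Tr\rho$. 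Consequently the discrepancy from $\Lambda^{\eta_a}(\rho)$ is \emph{input-dependent} and cannot be absorbed into a fixed admixture of $\Lambda(\Id_1)$ or of white noise; no ``carefully tuned mixture'' with the depolarizing channel can make the output exactly $\Lambda^{\eta_a}(\rho)$ for all density operators $\rho$. (A further, smaller, issue: $\{d_+^{(N)}\proj{\phi}^{\otimes N}d\phi\}$ integrates to the projector onto the symmetric subspace, not to $\Id$ on $\Hil_1^{\otimes N}$, so it is not a POVM on the support of mixed $\rho^{\otimes N}$.) Since an $N$-copy extension must satisfy Eq.~\eqref{eq:def_n_copy_extension} for \emph{all} states, this is a genuine gap, not a bookkeeping step.

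The paper closes this gap by duality rather than by an explicit channel. Writing $\bra{\phi}\JL^{sym}_N\ket{\phi}=\Tr\bigl[(\JL)_{01}\,\rho\bigr]$, where $\rho$ is the reduction of the symmetrized $\proj{\phi}$ and therefore admits an $N$ permutation-invariant extension, one shows that $(1-\eta_a)\rho+\eta_a\,\Id/(d_0d_1)$ is separable for $\eta_a\geq d_0d_1^2/(N+d_0d_1^2)$; block-positivity of $\JL$ then gives $\Tr[\JL\rho]\geq -d_1\Tr\JL/N$. The separability statement is exactly Lemma~\ref{lem:navascues_ext}, obtained from the quantitative finite de Finetti-type result for Bose-symmetric extensions (Lemma~\ref{thm:navascues}) after upgrading permutation invariance to Bose symmetry via the purification trick of Lemma~\ref{lem:purify}; the purification step is also the source of the $d_1^2$ (rather than $d_1$) in the bound. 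Some ingredient of this strength --- or an equally nontrivial direct proof that $\sum_i\JL_{0i}\otimes\Id_R+d_1\Tr\JL\cdot\Id\geq 0$ for block-positive $\JL$ --- is needed to complete your argument.
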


\begin{theorem}\label{thm:implementable_eta}
For any positive map 
$
\Lambda :\Lin(\Hil_1) \to  \Lin(\Hil_0),
$
its noisy version $\Lambda^{\eta_b} (\rho)$ defined as Eq.~\eqref{eq:depolarizing_first} is $N$-copy implementable
if $\eta_b \geq {d_1^2}/({N+d_1^2}) $, where $d_i = \dim \Hil_i$.
\end{theorem}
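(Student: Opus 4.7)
The plan is to apply Theorem~\ref{thm:optimality_of_sym_ext}, reducing $N$-copy implementability of $\Lambda^{\eta_b}$ to positivity of the Choi operator $\JL_N^{\eta_b,sym}$ of the universal extension. By the linearity of the construction in Eq.~\eqref{eq:n_copy_extension} applied to the decomposition $\Lambda^{\eta_b} = (1-\eta_b)\Lambda + \eta_b \Lambda(\Id_1/d_1)\Tr$, and noting that the Choi operator of the replacement map $\rho\mapsto \Lambda(\Id_1/d_1)\Tr\rho$ is $\Id_1\otimes\Lambda(\Id_1)/d_1$, one obtains
\begin{align*}
\JL_N^{\eta_b,sym} = (1-\eta_b)\,\JL_N^{sym} + \frac{\eta_b}{d_1}\,\Lambda(\Id_1)_0\otimes\Id_{1\cdots N}.
\end{align*}
Since the noise term is positive semidefinite, positivity of this sum is monotone in $\eta_b$; it therefore suffices to check it at the threshold $\eta_b = d_1^2/(N+d_1^2)$, where (after rescaling) the condition becomes
\begin{align*}
\sum_{i=1}^N \JL_{0i}\otimes\Id_{R_i} + d_1\,\Lambda(\Id_1)_0\otimes\Id_{1\cdots N} \geq 0.
\end{align*}

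To establish this inequality, I would take an arbitrary $|\psi\rangle \in \Hil_{01\cdots N}$ and rewrite both terms in terms of the marginals of $|\psi\rangle\langle\psi|$. Using the identities $\sum_i \langle\psi|\JL_{0i}\otimes\Id_{R_i}|\psi\rangle = N\,\Tr[\JL\,\bar\sigma_{01}]$, where $\bar\sigma_{01} := \frac{1}{N}\sum_i\Tr_{R_i}|\psi\rangle\langle\psi|$ is the symmetric average of the marginals on $\Hil_0\otimes\Hil_1$, and $\langle\psi|\Lambda(\Id_1)_0\otimes\Id_{1\cdots N}|\psi\rangle = \Tr[(\Lambda(\Id_1)\otimes\Id_1)\bar\sigma_{01}]$, the operator inequality is equivalent to the scalar bound
\begin{align*}
N\,\Tr[\JL\,\bar\sigma_{01}] + d_1\,\Tr[(\Lambda(\Id_1)\otimes\Id_1)\bar\sigma_{01}] \geq 0.
\end{align*}
The essential feature of $\bar\sigma_{01}$ is that it admits a symmetric $N$-extension on the $\Hil_1$ side, namely the $S_N$-symmetrization of $|\psi\rangle\langle\psi|$ over the input registers.

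The hardest step, I expect, is converting this $N$-symmetric extendibility into the quantitative operator bound above. The underlying idea is that block positivity of $\JL$ (equivalent to positivity of $\Lambda$) guarantees $\Tr[\JL\,\sigma_{\mathrm{sep}}] \geq 0$ on every separable $\sigma_{\mathrm{sep}}$, while symmetric $N$-extendibility forces $\bar\sigma_{01}$ to approximate the separable cone at a rate $O(d_1/N)$ in the natural operator norm induced by $\Lambda(\Id_1)\otimes\Id_1$. Making this precise---so that the possible negativity $-N\,\Tr[\JL\,\bar\sigma_{01}]$ is controlled by exactly $d_1\,\Tr[(\Lambda(\Id_1)\otimes\Id_1)\bar\sigma_{01}]$---is the main technical task of the proof. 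Once that bound is in hand, the theorem follows at the threshold value of $\eta_b$, and monotonicity in $\eta_b$ extends the conclusion to all $\eta_b\geq d_1^2/(N+d_1^2)$.
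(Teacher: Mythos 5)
Your reduction is correct and follows essentially the same route as the paper: by permutation symmetry of $\JL^{\eta_b}_N$, positivity reduces to showing $\Tr[\JL\,\bar\sigma^{\eta_b}]\geq 0$, where $\bar\sigma$ is the averaged two-body marginal of $\proj{\psi}$ on $\Hil_0\otimes\Hil_1$ and $\bar\sigma^{\eta_b}=(1-\eta_b)\bar\sigma+\eta_b\,\Tr_1\bar\sigma\otimes\Id/d_1$; block positivity of $\JL$ then finishes the job once $\bar\sigma^{\eta_b}$ is known to be separable. The algebra leading to the threshold inequality and the monotonicity in $\eta_b$ are both fine.

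However, the step you defer as ``the main technical task'' is the entire content of the theorem, and your sketch of it rests on a result that does not apply as stated. The $N$-extension of $\bar\sigma$ that you actually have --- the $S_N$-symmetrization of $\proj{\psi}$ over the input registers --- is only permutation \emph{invariant}, i.e.\ $(\Id_0\otimes P_\sigma)\rho_N(\Id_0\otimes P_\sigma)^\dagger=\rho_N$; it is generally \emph{not} Bose-symmetric (supported on the symmetric subspace, $(\Id_0\otimes P_\sigma)\rho_N=\rho_N$). The quantitative extendibility-implies-near-separability bound with rate $d_1/(N+d_1)$ that you implicitly invoke (the paper's Lemma~\ref{thm:navascues}, from Navascu\'es et al.) requires Bose symmetry. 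The paper closes this gap with Lemma~\ref{lem:purify}: a permutation-invariant state admits a purification that \emph{is} Bose-symmetric on the doubled registers $\Hil_{11'},\ldots,\Hil_{NN'}$, so one applies the Bose-symmetric bound on a system of local dimension $d_1^2$ and then traces out the purifying registers (Lemma~\ref{lem:navascues_ext}). That is precisely where the $d_1^2$ in the threshold $d_1^2/(N+d_1^2)$ comes from; the $O(d_1/N)$ rate you quote for merely permutation-invariant extensions is not established in general dimension (the paper obtains it only for $d_1=2$ by citing a separate result). Without the purification argument, your proof is incomplete at its decisive step.
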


For any positive map $\Lambda$, 
Theorem~\ref{thm:n_copy_extension_add_white_noise} states that by adding white noise of an amount of $O(1/N)$,
the noisy version of $\Lambda$ always becomes $N$-copy implementable.
While this result does not characterize the whole set of $N$-copy implementable maps,
it shows a subset of $N$-copy implementable maps (Fig.~\ref{fig:set_representation}(b)).

Theorem~\ref{thm:n_copy_extension_add_white_noise} also indicates that for any $\eta_a>0$, 
there exists a number $N$ such that $\Lambda^{\eta_a}$ is $N$-copy implementable,  that is, finite-copy implementable.
Since the set of positive maps is convex, any non-boundary (interior) positive map
can be decomposed as a convex sum of another positive map and the completely depolarizing channel.
Thus, we obtain the following corollary.
\begin{corollary}\label{cor:non_boundary_implementable}
Any non-boundary positive map is finite-copy implementable.
\end{corollary}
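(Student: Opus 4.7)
The plan is to exploit the non-boundary condition to rewrite $\Lambda$ itself as a noisy version $(\Lambda')^{\eta_a}$ of some auxiliary positive map $\Lambda'$, and then invoke Theorem~\ref{thm:n_copy_extension_add_white_noise} directly. The content of the corollary is then essentially a topological fact (interior points can absorb a bit of any positive direction) packaged with the previously proved bound.

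Concretely, I would introduce the completely depolarizing map appearing in Eq.~\eqref{eq:white_noise_robustness},
\begin{equation*}
\mathcal{N}(\rho) := \frac{\Tr \JL}{d_1 d_0}\, \Id_0 \,\Tr(\rho),
\end{equation*}
where $\JL$ is the Choi operator of $\Lambda$. This $\mathcal{N}$ is a fixed CP (hence positive) map with nonzero image whenever $\Tr \JL > 0$, which we may assume. Since $\Lambda$ lies in the interior of the convex cone of positive maps, by the definition of (topological) interior there exists $\eta_a \in (0,1)$ such that
\begin{equation*}
\Lambda' := \frac{1}{1-\eta_a}\bigl(\Lambda - \eta_a\, \mathcal{N}\bigr)
\end{equation*}
is still a positive map. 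A short computation on Choi operators shows $\Tr \JL' = \Tr \JL$, so the noise term produced by $\Lambda'$ in Eq.~\eqref{eq:white_noise_robustness} coincides with $\mathcal{N}$, and therefore $\Lambda = (1-\eta_a)\Lambda' + \eta_a \mathcal{N} = (\Lambda')^{\eta_a}$.

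With this decomposition in hand, pick any integer $N$ large enough that $N \geq d_0 d_1^2 (1-\eta_a)/\eta_a$, equivalently $\eta_a \geq d_0 d_1^2/(N+d_0 d_1^2)$. Applying Theorem~\ref{thm:n_copy_extension_add_white_noise} to the positive map $\Lambda'$ at this noise level then produces a CP $N$-copy extension of $(\Lambda')^{\eta_a} = \Lambda$, which is precisely the statement that $\Lambda$ is finite-copy implementable.

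The one substantive step is the existence of a strictly positive $\eta_a$ with $\Lambda - \eta_a \mathcal{N}$ still positive, and this is where the non-boundary hypothesis does all the work: in any norm on the finite-dimensional space of linear maps $\Lin(\Hil_1)\to\Lin(\Hil_0)$, a point in the interior of the positive cone can be perturbed by a bounded amount in every direction (in particular, by $-\eta_a \mathcal{N}$) and remain in the cone. I would be slightly careful here to note that the cone of positive maps and its convex-combinations-with-$\mathcal{N}$ interpretation share the same boundary, so that ``non-boundary'' in the convex sense suggested by the text matches ``interior of the cone'' used in the argument. Once this point is settled, the remainder of the proof is a routine choice of $N$ from a strict inequality on $\eta_a$.
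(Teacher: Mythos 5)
Your proof is correct and takes essentially the same route as the paper: write the interior map as a convex combination $\Lambda = (1-\eta_a)\Lambda' + \eta_a \mathcal{N}$ of another positive map with the depolarizing term (possible for some $\eta_a>0$ precisely because $\Lambda$ is non-boundary), then apply Theorem~\ref{thm:n_copy_extension_add_white_noise} to $\Lambda'$ with $N$ large enough that $\eta_a \geq d_0 d_1^2/(N+d_0 d_1^2)$. Your explicit check that $\Tr \JL' = \Tr \JL$, so that the noise term of $(\Lambda')^{\eta_a}$ in Eq.~\eqref{eq:white_noise_robustness} really coincides with $\mathcal{N}$ and hence $(\Lambda')^{\eta_a}=\Lambda$, is a detail the paper leaves implicit but is worth making explicit.
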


\begin{figure}[t]
\centering
\flushleft{(a)}
\includegraphics[width=85mm]{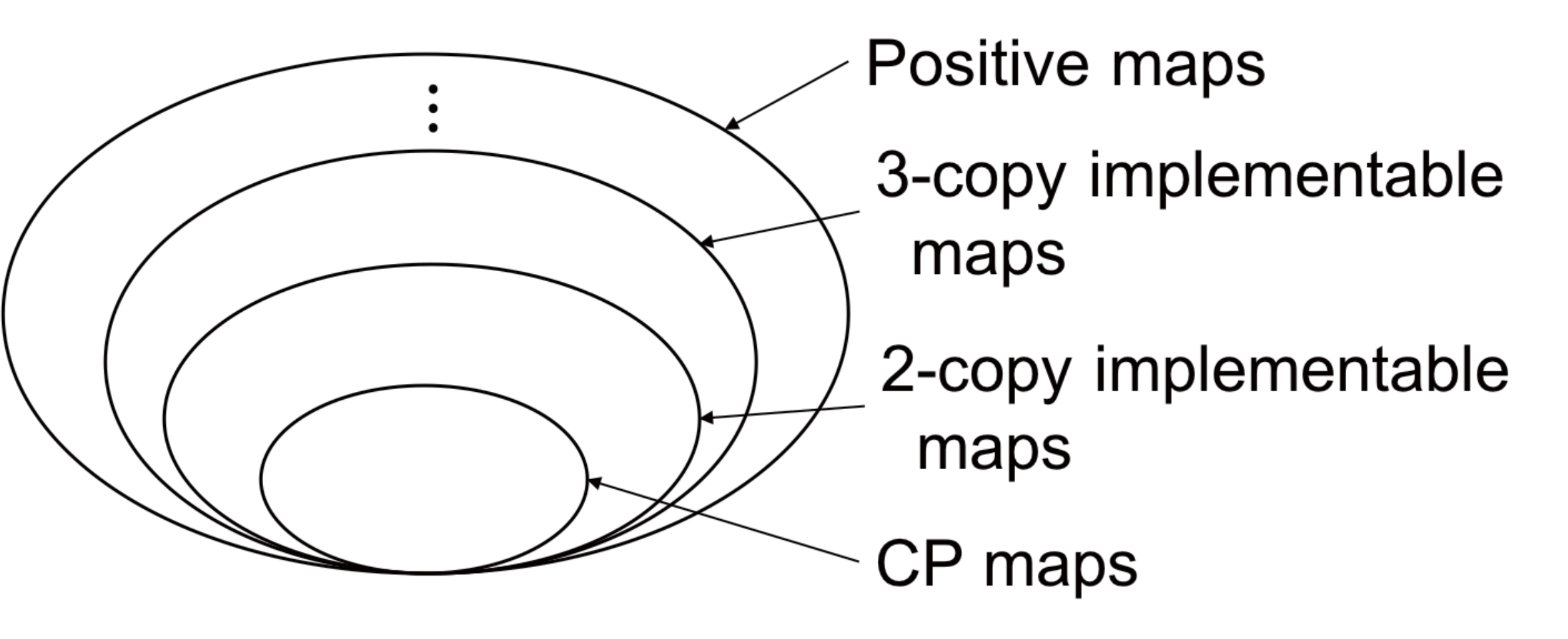}
\flushleft{(b)}
\includegraphics[width=85mm]{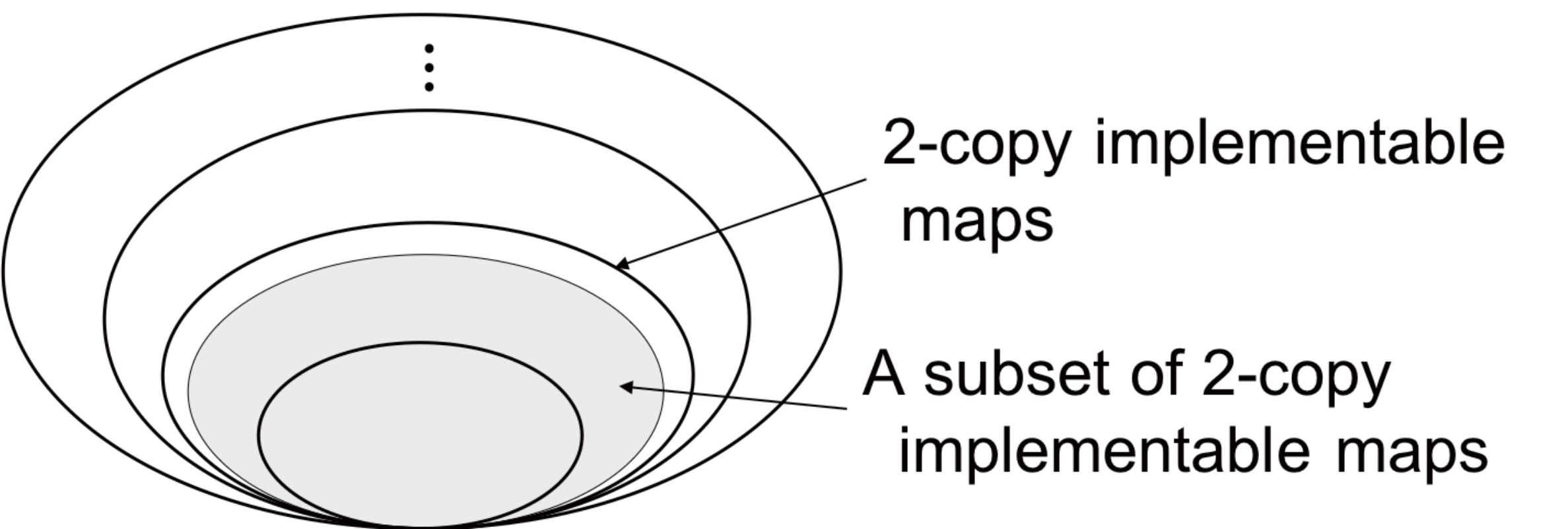}
\caption{(a) A pictorial representation of the set of positive maps.
Theorem~\ref{thm:optimality_of_sym_ext} gives a complete characterization of $N$-copy implementable maps.
While we have the inclusion relations between the sets,
it is not clear from Theorem~\ref{thm:optimality_of_sym_ext}
if the set of $N$-copy implementable maps converges to the set of positive maps, or how it converges.
(b) In Theorem~\ref{thm:n_copy_extension_add_white_noise}, 
we analyze the set of $N$-copy implementable maps by restricting the positive map to be
 the convex sum of another positive map with the depolarizing channel.
With this restriction, we obtain a subset of $N$-copy implementable maps.
We show that this subset of $N$-copy implementable maps converges to the set of positive maps with $N$.
This result also implies the set of $N$-copy implementable maps converges to the set of positive maps.
}\label{fig:set_representation}
\end{figure}

To show a range of $\eta$ that $\Lambda^\eta_N$ is CP for either case,
 we show the corresponding Choi operator $\JL^\eta_N$ is positive,
that is, the minimum eigenvalue of $ \JL^\eta_N $ is non-negative,
namely, $ \lambda_{\min} ( \JL^\eta_N) \geq 0$.
We first use Lemma~\ref{lem:purify} and~\ref{thm:navascues} to prove Lemma~\ref{lem:navascues_ext},
and then prove Theorem~\ref{thm:n_copy_extension_add_white_noise} and \ref{thm:implementable_eta}.

\begin{lemma}[Ref.~\cite{christandl07,math_entanglement}]\label{lem:purify}
If $\rho_{N} \in \Lin(\Hil_{01\cdots N}) $ is permutation invariant on the permutation of $\Hil_1, \ldots, \Hil_N$, 
i.e., $(\Id_0 \otimes P_\sigma) \rho_{N} (\Id_0 \otimes P_\sigma)^\dag = \rho_{N}$
for all permutation operators $P_\sigma$ of $\Hil_{1} \otimes \cdots \otimes \Hil_{N}$,
its purification $\ket{\psi_N} \in \Hil_{01\cdots N} \otimes \Hil_{0'1'\cdots N'}$, where $\Hil_{k'} \cong \Hil_k$, is Bose-symmetric on 
permutation of $\Hil_{11'}, \ldots, \Hil_{NN'}$,
 i.e., $(\Id_{00'} \otimes P_{\sigma'}) \ket{\psi_N} = \ket{\psi_N}$ for all permutation operators $P_{\sigma'}$ of $\Hil_{11'} \otimes \cdots \otimes \Hil_{NN'}$.
\end{lemma}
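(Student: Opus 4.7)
The plan is to exhibit an explicit purification of $\rho_N$ that is manifestly Bose-symmetric on the paired systems, rather than trying to argue abstractly from the non-uniqueness of purifications. The natural candidate is the canonical purification built from the (unnormalized) maximally entangled vector.

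First, I would fix orthonormal product bases on $\Hil_0, \Hil_1, \ldots, \Hil_N$ and mirror bases on the primed copies, and define the unnormalized maximally entangled vector $\ket{\Phi} = \sum_i \ket{i}_{01\cdots N} \ket{i}_{0'1'\cdots N'}$ between $\Hil_{01\cdots N}$ and $\Hil_{0'1'\cdots N'}$. Then set $\ket{\psi_N} := (\sqrt{\rho_N} \otimes \Id_{0'1'\cdots N'}) \ket{\Phi}$. A direct computation gives $\Tr_{0'1'\cdots N'} \proj{\psi_N} = \sqrt{\rho_N}\, \Id\, \sqrt{\rho_N}^{\,\dag} = \rho_N$, so $\ket{\psi_N}$ is a valid purification.

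Next, I would reorganize the tensor-product structure: after the natural reordering of factors, $\Hil_{11'} \otimes \cdots \otimes \Hil_{NN'}$ is isomorphic to $\Hil_{1\cdots N} \otimes \Hil_{1'\cdots N'}$, and under this identification any paired permutation operator factorizes as $P_{\sigma'} = P_\sigma \otimes P_\sigma$ for the same underlying $\sigma \in S_N$ acting on the unprimed and primed marginals. Consequently $\Id_{00'} \otimes P_{\sigma'} = (\Id_0 \otimes P_\sigma) \otimes (\Id_{0'} \otimes P_\sigma)$.

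The argument then combines two ingredients. Since $\rho_N$ commutes with $\Id_0 \otimes P_\sigma$ by hypothesis, the functional calculus gives $\sqrt{\rho_N}(\Id_0 \otimes P_\sigma) = (\Id_0 \otimes P_\sigma)\sqrt{\rho_N}$. And the standard mirror identity $(A \otimes \Id)\ket{\Phi} = (\Id \otimes A^T)\ket{\Phi}$, applied with $A = \Id_0 \otimes P_\sigma$, lets me slide the unprimed permutation across $\ket{\Phi}$, where it becomes $\Id_{0'} \otimes P_\sigma^T = \Id_{0'} \otimes P_{\sigma^{-1}}$ on the primed side. This precisely cancels the already-present $\Id_{0'} \otimes P_\sigma$, leaving $(\Id_{00'} \otimes P_{\sigma'})\ket{\psi_N} = \ket{\psi_N}$.

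The main subtlety to watch out for is not deep but needs discipline: one must fix a single consistent product basis and transpose convention on the primed systems so that $P_\sigma^T = P_{\sigma^{-1}}$ holds cleanly, and verify that the paired permutation really decomposes as $P_\sigma \otimes P_\sigma$ and not $P_\sigma \otimes P_{\sigma^{-1}}$ or some variant coming from the reordering of tensor factors. Once this bookkeeping is pinned down, the computation above is routine and the Bose symmetry of the canonical purification follows immediately.
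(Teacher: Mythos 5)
Your proposal is correct and is essentially the paper's own proof: both take the canonical purification $(\sqrt{\rho_N}\otimes\Id)\ket{\Phi}$ (the paper writes $\sqrt{\rho_N}$ as $M=\sum_\lambda\sqrt{\lambda}\,\Pi_\lambda$, which is the same functional-calculus commutation you invoke), factor the paired permutation as $(\Id_0\otimes P_\sigma)\otimes(\Id_{0'}\otimes P_\sigma)$, and use the mirror identity together with $P_\sigma^T=P_{\sigma^{-1}}$ to cancel the two permutations. No gaps; the bookkeeping caveats you flag are exactly the ones the paper handles implicitly.
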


\begin{proof}
First diagonalize $\rho_N$ as $\rho_N = \sum_\lambda \lambda \Pi_\lambda$,
where $\Pi_\lambda$ is the projector onto the eigenspace corresponding to an eigenvalue $\lambda$.
For each $\lambda$,
\begin{align}
\Pi_\lambda = (\Id_0 \otimes P_\sigma) \Pi_\lambda (\Id_0 \otimes P_\sigma)^\dag
\end{align}
holds for all $P_\sigma$.
Defining $M := \sum_\lambda \sqrt{\lambda} \Pi_\lambda$,
$M$ also satisfies 
\begin{align}
M = (\Id_0 \otimes P_\sigma) M (\Id_0 \otimes P_\sigma)^\dag
\end{align}
for all $P_\sigma$.
Define a purification of $\rho_N$ as $\ket{\psi_N} := (M\otimes \Id) \ket{\phi^+}$,
where $\ket{\phi^+}$ is the maximally entangled state between $\Hil_{01\cdots N}$ and $\Hil_{0' 1' \cdots N'}$.
Then it satisfies 
\begin{align}
(\Id_{00'} \otimes P_{\sigma'}) \ket{\psi_N} &=(\Id_0 \otimes P_\sigma)\otimes(\Id_0 \otimes P_\sigma)\ket{\psi_N} \notag \\
&= [(\Id_0 \otimes P_\sigma)\otimes(\Id_0 \otimes P_\sigma)] (M\otimes \Id) \ket{\phi^+} \notag \\
&= (M\otimes \Id) [(\Id_0 \otimes P_\sigma)\otimes(\Id_0 \otimes P_\sigma) ]\ket{\phi^+} \notag \\
&= (M\otimes \Id) [(\Id_0 \otimes P_\sigma)(\Id_0 \otimes P_\sigma)^T \otimes \Id ]\ket{\phi^+} \notag \\
&=  (M\otimes \Id) \ket{\phi^+}  = \ket{\psi_N},
\end{align}
that is, $\ket{\psi_N}$ is Bose-symmetric on $\Hil_{11'} , \ldots, \Hil_{NN'}$.
\end{proof}

\begin{lemma}[Theorem~2 of Ref.~\cite{symmetric_extension}]\label{thm:navascues}
If $\rho \in \Lin(\Hil_{01})$ has an $N$ Bose-symmetric extension $\rho_N \in \Lin(\Hil_{01\cdots N})$ extending $\Hil_1$, that is,
\begin{enumerate}
\item
$\rho_{N}\, \geq 0$
\item
$\Tr_{2\cdots N}\, \rho_{N} = \rho$
\item
$\rho_{N}$ is Bose-symmetric, i.e., $(\Id_0 \otimes P_\sigma) \rho_{N} = \rho_{N}$
for all permutation operators $P_\sigma$ of $\Hil_{11'} \otimes \cdots \otimes \Hil_{NN'}$,
\end{enumerate}
then $\rho^\eta := (1-\eta) \rho + \eta (\Tr_1 \rho) \otimes \Id/d_1$ is separable for $\eta \geq d_1 / (N + d_1)$.
\end{lemma}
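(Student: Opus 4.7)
The plan is to reproduce the Navascu\'es--Owari--Plenio construction of Ref.~\cite{symmetric_extension} by writing down an explicit separable decomposition of $\rho^\eta$ built out of the Bose-symmetric extension $\rho_N$. The central tool is the coherent-state resolution of the projector onto the symmetric subspace of $\Hil_1^{\otimes k}$,
\begin{align}
\int d\phi\, \proj{\phi}^{\otimes k} = \frac{1}{D_k}\,\Pi_{\mathrm{Sym}}^{(k)}, \qquad D_k := \binom{k+d_1-1}{d_1-1},
\end{align}
where $d\phi$ is the unitarily invariant measure on pure states $\ket{\phi}\in\Hil_1$ and $\Pi_{\mathrm{Sym}}^{(k)}$ is the projector onto $\mathrm{Sym}^k(\Hil_1)$.

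First I would use Bose-symmetry of $\rho_N$ to write $\rho_N = (\Id_0\otimes\Pi_{\mathrm{Sym}}^{(N)})\,\rho_N$ and then insert the coherent-state resolution to obtain $\rho_N = D_N\int d\phi\,(\Id_0\otimes\proj{\phi}^{\otimes N})\,\rho_N$. Splitting the $N$-fold product factor as $\proj{\phi}_1 \otimes \proj{\phi}^{\otimes(N-1)}_{2\cdots N}$ isolates a single coherent-state projector on the ``Bob'' system $\Hil_1$ that will be kept, while contracting $\rho_N$ with the remaining $N-1$ coherent-state projectors through the partial trace over $\Hil_2\otimes\cdots\otimes\Hil_N$ yields a positive operator $\sigma_0(\phi)\in\Lin(\Hil_0)$. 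The resulting integral $\int d\phi\,\sigma_0(\phi)\otimes\proj{\phi}_1$ is then manifestly separable across the cut $\Hil_0\,:\,\Hil_1$ by construction, and its total weight can be evaluated by applying $\int d\phi\,\proj{\phi} = \Id_1/d_1$ to the free Bob index, relating it to the marginal $\rho_0 = \Tr_1 \rho$.

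The main obstacle is the combinatorial bookkeeping. The separable candidate assembled above is \emph{not} equal to $\rho$ on its own; it differs from $\rho$ by a multiple of $\rho_0\otimes\Id_1/d_1$ whose size is controlled by the ratio of symmetric-subspace dimensions $D_{N-1}/D_N$ and by the depolarizing factor $\Id_1/d_1$ produced by integrating one coherent-state projector. Intuitively, contracting $\rho_N$ against $N-1$ random coherent states does not perfectly reconstruct the missing symmetric factor, and the defect points purely along the depolarizing direction $\rho_0\otimes\Id_1/d_1$. Tracking these prefactors carefully pins down the critical threshold $\eta = d_1/(N+d_1)$ at which the defect is exactly cancelled by the added white noise, turning the construction into a bona fide separable decomposition of $\rho^\eta$. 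Separability for all larger $\eta$ then follows from convexity of the set of separable states, since $\rho_0\otimes\Id_1/d_1$ is itself separable and can always be mixed in at will.
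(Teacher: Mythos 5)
The paper does not actually prove this lemma---it is imported verbatim as Theorem~2 of the cited reference---so there is no internal proof to compare against; what you have written is a reconstruction of the standard Navascu\'es--Owari--Plenio measure-and-prepare argument, and the construction you describe (measure the $N$ Bob systems with the covariant POVM $\{D_N\proj{\phi}^{\otimes N}d\phi\}$, prepare $\proj{\phi}$, yielding the manifestly separable $\Omega:=D_N\int d\phi\,\Tr_{1\cdots N}[\rho_N(\Id_0\otimes\proj{\phi}^{\otimes N})]\otimes\proj{\phi}$) is the right one. The gap is that the entire content of the theorem sits in the step you label ``combinatorial bookkeeping'' and justify only with ``intuitively'': the claim that the defect $\Omega-\rho$ points exactly along $\rho_0\otimes\Id/d_1-\rho$ with coefficient $d_1/(N+d_1)$ is not obvious and must be derived, not asserted. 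To close it, attach a fresh output system $\Hil_{1'}\cong\Hil_1$ and use $D_N\int d\phi\,\proj{\phi}^{\otimes N}_{1\cdots N}\otimes\proj{\phi}_{1'}=\frac{N+1}{N+d_1}\,\Pi^{(N+1)}_{\mathrm{Sym}}$ (since $D_N/D_{N+1}=(N+1)/(N+d_1)$) together with the coset decomposition $\Pi^{(N+1)}_{\mathrm{Sym}}=\frac{1}{N+1}\bigl(\Pi^{(N)}_{\mathrm{Sym}}\otimes\Id_{1'}\bigr)\sum_{k}S_{k,1'}$ (with $S_{1',1'}:=\Id$). Bose symmetry absorbs $\Pi^{(N)}_{\mathrm{Sym}}$ into $\rho_N$, the identity $\Tr_k[(M_k\otimes\Id_{1'})S_{k,1'}]=M_{1'}$ converts each transposition term into the two-body marginal $\rho_{0k}$, and permutation invariance of $\rho_N$ gives $\rho_{0k}=\rho$ for every $k$; the $S_{1',1'}=\Id$ term produces $\rho_0\otimes\Id_{1'}$. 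Hence
\begin{align}
\Omega=\frac{1}{N+d_1}\left(\sum_{k=1}^{N}\rho_{0k}+\rho_0\otimes\Id\right)
=\frac{N}{N+d_1}\,\rho+\frac{d_1}{N+d_1}\,\rho_0\otimes\frac{\Id}{d_1},
\end{align}
which is $\rho^\eta$ at $\eta=d_1/(N+d_1)$; your convexity remark then handles all larger $\eta$. Note also that permutation invariance of the marginals is genuinely used here, so it should be stated as an ingredient rather than left implicit. With that computation supplied, the argument is complete and is the same proof as in the cited reference.
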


\begin{lemma}\label{lem:navascues_ext}
If $\rho \in \Lin(\Hil_{01})$ has an $N$ permutation invariant extension $\rho_N \in \Lin(\Hil_{01\cdots N}) $ extending $\Hil_1$,
that is,
\begin{enumerate}
\item
$\rho_{N}\, \geq 0$
\item
$\Tr_{2\cdots N}\, \rho_{N} = \rho$
\item
$\rho_{N}$ is permutation invariant, i.e., $(\Id_0 \otimes P_\sigma) \rho_{N} (\Id_0 \otimes P_\sigma)^\dag = \rho_{N}$
for all permutation operators $P_\sigma$ of $\Hil_{11'} \otimes \cdots \otimes \Hil_{NN'}$,
\end{enumerate}
then 
\begin{enumerate}
\item
$\rho^{\eta_b} = (1-\eta_b) \rho + \eta_b \Tr_1 \rho \otimes \Id/d_1$ is separable for $\eta_b \geq d_1^2 / (N + d_1^2)$.
\item
$\rho^{\eta_a} = (1-\eta_a) \rho + \eta_a \Id/d_0 \otimes \Id/d_1$ is separable for $\eta_a \geq d_0 d_1^2 / (N + d_0 d_1^2)$.
\end{enumerate}
\end{lemma}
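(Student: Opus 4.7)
The plan is to combine the purification trick of Lemma \ref{lem:purify} with Navascues' Bose-symmetric extension result (Lemma \ref{thm:navascues}) to prove part 1 directly, and then to derive part 2 as an algebraic corollary of part 1.

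For part 1, I would first apply Lemma \ref{lem:purify} to the permutation-invariant extension $\rho_N$, obtaining a purification $\ket{\psi_N}$ on $\Hil_{01\cdots N}\otimes\Hil_{0'1'\cdots N'}$ that is Bose-symmetric under permutations of the paired spaces $\Hil_{ii'}$. Viewing $\proj{\psi_N}$ as an $N$-Bose-symmetric extension of the bipartite state $\sigma:=\Tr_{(22')\cdots(NN')}\proj{\psi_N}$ on $\Hil_{00'}\otimes\Hil_{11'}$ (so the ``$B$'' system has dimension $d_1^2$), Lemma \ref{thm:navascues} yields that $\sigma^\eta:=(1-\eta)\sigma+\eta\,\sigma_{00'}\otimes\Id_{11'}/d_1^2$ is separable across the cut $\Hil_{00'}\,|\,\Hil_{11'}$ whenever $\eta\ge d_1^2/(N+d_1^2)$. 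To descend to $\Hil_{01}$, I apply the separability-preserving local CPTP map $\Tr_{0'}\otimes\Tr_{1'}$; using the purification identities $\Tr_{0'1'}\sigma=\rho$ and $\Tr_{0'}\sigma_{00'}=\Tr_1\rho$, the image reads exactly $\rho^{\eta_b}$ with $\eta_b=\eta$, giving part 1.

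For part 2, the approach is to reduce to part 1 via a convex-combination identity. Setting $\eta_b^*:=d_1^2/(N+d_1^2)$ and $\eta_a^*:=d_0d_1^2/(N+d_0d_1^2)$ and assuming $d_0\ge 2$ (the case $d_0=1$ makes the two statements coincide), I would verify that
\begin{equation}
\rho^{\eta_a^*}=\frac{N+d_1^2}{N+d_0d_1^2}\,\rho^{\eta_b^*}+\frac{(d_0-1)d_1^2}{N+d_0d_1^2}\,\tau,
\end{equation}
where $\tau:=\frac{\Id_0-\Tr_1\rho}{d_0-1}\otimes\frac{\Id_1}{d_1}$ is manifestly a product state, hence separable. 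Since $\rho^{\eta_b^*}$ is separable by part 1, $\rho^{\eta_a^*}$ is a convex mixture of separable states and therefore separable; values $\eta_a>\eta_a^*$ follow by further mixing with the (separable) maximally mixed state.

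I expect the main obstacle to be part 2, specifically spotting the right convex decomposition. It hinges on the algebraic coincidence $\eta_a^*(1-\eta_b^*)/d_0=(1-\eta_a^*)\eta_b^*$ between the two thresholds, which is precisely what makes the residual term in the decomposition factor as a positive scalar multiple of $(\Id_0-\Tr_1\rho)\otimes\Id_1/d_1$ and thereby become a product state. Once that is noticed, the remaining work is routine purification, partial-trace, and coefficient-matching bookkeeping.
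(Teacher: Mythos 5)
Your proposal is correct and follows essentially the same route as the paper: purify $\rho_N$ via Lemma~\ref{lem:purify}, apply Lemma~\ref{thm:navascues} to the doubled system $\Hil_{00'}\otimes\Hil_{11'}$ (where the extended subsystem has dimension $d_1^2$), and trace out the primed spaces to get part~1; then obtain part~2 by the convex decomposition $\rho^{\eta_a^*}=p\,\rho^{\eta_b^*}+(1-p)\,\tau$ with the same coefficients the paper uses, your $\tau\propto(\Id_0-\Tr_1\rho)\otimes\Id_1/d_1$ being exactly the paper's sum of cyclically shifted spectral terms $\sum_i(\rho^0)^c_i\otimes\Id_1/d_1$ written in closed form (both implicitly use $\Tr\rho=1$ to ensure $\Id_0-\Tr_1\rho\geq 0$).
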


\begin{proof}
1. 
Lemma~\ref{lem:purify} shows that $\rho_N$'s purification $\ket{\psi_N} \in \Hil_{01\cdots N} \otimes \Hil_{0'1'\cdots N'}$ is Bose-symmetric.
Let $\rho' = \Tr_{22' \cdots NN'} \proj{\psi_N}$,
then $\rho'$ has an $N$ Bose-symmetric extension and $\rho = \Tr_{0'1'} \rho'$.
Lemma~\ref{thm:navascues} shows that 
$(1-\eta_b) \rho' + \eta_b \Tr_{11'} \rho' \otimes \Id/d_1^2$ is separable between 
$\Hil_{00'}$ and $\Hil_{11'}$ for $\eta_b \geq d_1^2 / (N + d_1^2)$, where $d_1^2 = \dim \Hil_{11'}$.
By taking partial trace on $\Hil_{0'1'}$, we show that $(1-\eta_b) \rho + \eta_b \Tr_1 \rho \otimes \Id/d_1$ is also separable.

2. 
Let $\rho_0 = \Tr_1 \rho$ be decomposed as $\sum_{k=0}^{d_0-1} p_k \proj{\psi_k}$, 
and $(\rho^0)^c_i = \sum_{k=0}^{d_0-1} p_{(k+i)\ \mathrm{mod}\ d} \proj{\psi_k}$ for $i=1,\ldots,d_0-1$,
then $\rho_0 + \sum_{i=1}^{d_0 - 1} (\rho^0)^c_i = \Id$.
Thus, $\rho^{\eta_a}$ can be written with $\rho^{\eta_b}$ by
\begin{align}
\rho^{\eta_a} = p \rho^{\eta_b} + \frac{1-p}{d_0-1} \sum_{i=1}^{d_0 - 1} (\rho^0)^c_i \otimes \frac{\Id_1}{d_1}
\end{align}
with $1-\eta_a = p(1-\eta_b)$ and $p$ satisfies $p \eta_b = (1-p) / (d_0 - 1)$.
Since $\rho^{\eta_b}$ is separable for $\eta_b \geq d_1^2 / (N + d_1^2)$, 
 $\rho^{\eta_a}$ is also separable for $\eta_a \geq d_0 d_1^2 / (N + d_0 d_1^2)$
\end{proof}

\begin{proof}[Proof of Theorem~\ref{thm:n_copy_extension_add_white_noise} and \ref{thm:implementable_eta}]
We first  prove Theorem~\ref{thm:implementable_eta}.
It is enough to prove that 
the minimum eigenvalue of $ \JL^{\eta_b}_N $ is non-negative for $\eta_b \geq d_1^2 / (N + d_1^2)$.
$ \JL^{\eta_b}_N $ is invariant under permutations of $\mathcal{H}_1,\ldots,\mathcal{H}_N$,
that is, for any permutation operator $P_\sigma$,
\begin{align}
(\Id_0 \otimes P_\sigma) \JL^{\eta_b}_N (\Id_0 \otimes P_\sigma)^\dag = \JL^{\eta_b}_N
\end{align} holds.
By considering this symmetry, $ \JL^{\eta_b}_N $ can be written as
\begin{align}
\JL^{\eta_b}_N 
= \frac{1}{N!} \sum_{\sigma} (\Id_0 \otimes P_\sigma)  ( \JL^{\eta_b} )_{01} \otimes \Id_{R} (\Id_0 \otimes P_\sigma)^\dag,
\end{align}
where $\JL^{\eta_b}$ is the Choi operator of $\Lambda^{\eta_b}$.

Consider a unit vector $\ket{\phi} \in \Hil_{01\cdots N}$.
The smallest eigenvalue of $ \JL^{\eta_b}_N $ is given by 
\begin{align}
\lambda_{\min}( \JL^{\eta_b}_N ) = \min_{ |\!| \ket{\phi} |\!| = 1} \bra{\phi} \JL^{\eta_b}_N \ket{\phi},
\end{align}
where
\begin{align}
&\bra{\phi} \JL^{\eta_b}_N \ket{\phi} \notag \\
&= \Tr \JL^{\eta_b}_N \proj{\phi} \notag \\ 
&= \Tr  \frac{1}{N!} \sum_{\sigma} (\Id_0 \otimes P_\sigma) ( (\JL^{\eta_b})_{01} \otimes \Id_{R}) (\Id_0 \otimes P_\sigma)^\dag \proj{\phi} \notag \\
&= \Tr ((\JL^{\eta_b})_{01} \otimes \Id_{R}) \sum_{\sigma}\frac{1}{N!} (\Id_0 \otimes P_\sigma) \proj{\phi} (\Id_0 \otimes P_\sigma)^\dag \notag \\
&=: \Tr ((\JL^{\eta_b})_{01} \otimes \Id_{R}) \rho_N \\
&=: \Tr (\JL^{\eta_b})_{01} \rho.
\end{align}
The permutation symmetry of $\JL^{\eta_b}_N$ is transformed to the permutation symmetry of the state $\rho_N$,
as $\rho_N$ satisfies 
\begin{align}
\rho_N = (\Id_0 \otimes P_\sigma) \rho_N (\Id_0 \otimes P_\sigma)^\dag
\end{align}
 for all $P_\sigma$.
Thus $\rho$ has an $N$ permutation invariant extension.
Here 
\begin{align}
\Tr \JL^{\eta_b} \rho &= (1-{\eta_b}) \Tr \JL \rho + {\eta_b} \Tr (\Tr_1 \JL \otimes \Id_1/d_1) \rho \notag \\
&= (1-{\eta_b}) \Tr \JL \rho + {\eta_b} \Tr (\Tr_1 \JL ) (\Tr_1 \rho) / d_1 \notag \\
&= (1-{\eta_b}) \Tr \JL \rho + {\eta_b} \Tr \JL  (\Tr_1 \rho \otimes \Id_1 / d_1) \notag \\
&= \Tr \JL \rho^{\eta_b}
\end{align}
holds, and the first bound of Lemma~\ref{lem:navascues_ext} shows that $\rho^{\eta_b}$ is separable for ${\eta_b} \geq d_1^2 / (N + d_1^2)$.
Notice that for a positive map $\Lambda$,
the corresponding Choi operator $\JL$ satisfies
\begin{align}
\Tr \JL (\rho \otimes \sigma) \geq 0
\end{align}
for all positive operators $\rho $ and $\sigma$~\cite{positive_maps}.
Therefore, $\lambda_{\min}(\JL^{\eta_b}_N) \geq 0$ for these ${\eta_b}$.

The proof of Theorem~\ref{thm:n_copy_extension_add_white_noise} is in a similar way, 
and the only different part is that
\begin{align}
\Tr \JL^{\eta_a} \rho &= (1-{\eta_a}) \Tr \JL \rho + {\eta_a} \frac{\Tr \JL}{d_1} \frac{\Tr (\Id_0 \otimes \Id_1) \rho }{d_0} \notag \\
&= (1-{\eta_a}) \Tr \JL \rho + {\eta_a} (\Tr\JL) / d_0 d_1 \notag \\
&= (1-{\eta_a}) \Tr \JL \rho + {\eta_a} \Tr \JL (\Id_0/d_0 \otimes \Id_1 / d_1) \notag \\
&= \Tr \JL \rho^{\eta_a}.
\end{align}
The second bound of Lemma~\ref{lem:navascues_ext} shows Theorem~\ref{thm:n_copy_extension_add_white_noise}.

\end{proof}

Note that for $d_1 = 2$, the two bounds shown in Lemma~\ref{lem:navascues_ext} can be improved using the result of~\cite{symmetric_two_dim}.
Under the condition of Lemma~\ref{lem:navascues_ext}, it is shown in~\cite{symmetric_two_dim} that 
$\rho^{\eta_b} = (1-\eta_b) \rho + \eta_b \Tr_1 \rho \otimes \Id/d_1$ 
is separable for $\eta_b \geq d_1 / (N + d_1)$,
which corresponds to the first bound.
Following the same calculation in the proof of Lemma~\ref{lem:navascues_ext}, we also obtain that 
$\rho^{\eta_a} = (1-\eta_a) \rho + \eta_a \Id/d_0 \otimes \Id/d_1$ is separable for $\eta_a \geq d_0 d_1 / (N + d_0 d_1)$.
These results also apply to Theorem~\ref{thm:n_copy_extension_add_white_noise} and \ref{thm:implementable_eta},
and the bounds can be improved in the same way for $d_1 = 2$.

\section{Positive maps with no $N$-copy extension}
\label{sec:no_n_copy_extension}

Theorem~\ref{thm:n_copy_extension_add_white_noise} gives a sufficient condition for a map to be $N$-copy implementable,
but it cannot be used to show that a map is not $N$-copy implementable.
In this section, we give a necessary condition for a map to be $N$-copy implementable.
We show a theorem to bound the set of $N$-copy implementable maps complementary to Theorem~\ref{thm:n_copy_extension_add_white_noise} and Theorem~\ref{thm:implementable_eta}.
With this theorem, we also provide  a characterization of certain  types of maps that are not finite-copy implementable in the next section.

\begin{theorem}\label{thm:condition_for_no_n_copy_extension}
$\Lambda$ is not $N$-copy implementable if the following operator is not positive.
\begin{align}
\sum_{i,j=0}^{d_1-1}   \ketbra{i}{j} \otimes \Lambda( \ketbra{i}{j} ) 
+ (N-1) \sum_{i=1}^{d_1-1}  \proj{i}  \otimes \Lambda(\proj{0})  \label{eq:condition_for_no_n_copy_extension}
\end{align}
\end{theorem}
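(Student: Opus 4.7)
The plan is to invoke Theorem~\ref{thm:optimality_of_sym_ext} and reduce the claim to the following: whenever $\JL^{sym}_N \geq 0$, the operator $X$ displayed in Eq.~\eqref{eq:condition_for_no_n_copy_extension} is also positive. The natural tool is a linear map $V:\Hil_1\to\Hil_{1\cdots N}$ implementing a compression identity
\[
(\Id_0\otimes V)^\dag\, \JL^{sym}_N\, (\Id_0\otimes V) \;=\; X,
\]
from which $X\geq 0$ follows immediately from $\JL^{sym}_N\geq 0$. The contrapositive is then exactly Theorem~\ref{thm:condition_for_no_n_copy_extension}.

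Choosing $V$ correctly is the crux. The naive attempt of sandwiching the $N-1$ auxiliary input copies against $\ket{0}$, i.e.\ $V\ket{i}:=\ket{i}\otimes\ket{0}^{\otimes(N-1)}$, yields only $L+(N-1)\Lambda(\proj{0})\otimes\Id_1$, which differs from $X$ by the positive term $(N-1)\proj{0}\otimes\Lambda(\proj{0})$ and is therefore a strictly weaker necessary condition. Instead I would use the Dicke-type assignment
\[
V\ket{0}:=\ket{0}^{\otimes N},\qquad V\ket{i}:=\sum_{k=1}^{N}\ket{0}^{\otimes(k-1)}\otimes\ket{i}_k\otimes\ket{0}^{\otimes(N-k)}\quad (i\geq 1),
\]
extended linearly; the symmetrization on the $i\geq 1$ rays is what suppresses the spurious $\proj{0}\otimes\Lambda(\proj{0})$ contribution while leaving the $i=j\geq 1$ blocks of $X$ intact.

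Verification is a block-by-block calculation: expand $\JL^{sym}_N=\frac{1}{N}\sum_{l=1}^N (L)_{0l}\otimes\Id_R$ and compute $\bra{i}_1(\Id_0\otimes V)^\dag\,\JL^{sym}_N\,(\Id_0\otimes V)\ket{j}_1$ as an operator on $\Hil_0$, tracking the positions $k'$ and $k$ that carry the ``non-zero'' indices inside $V\ket{i}$ and $V\ket{j}$. The orthogonality $\langle 0|i\rangle=0$ for $i\geq 1$, applied through the $\Id_R$ factors, forces $k=k'$; the surviving contributions are (a) $k=k'=l$, producing $\Lambda(\ketbra{i}{j})$, and (b) when $i=j\geq 1$, the $N-1$ terms with $k=k'\neq l$, each producing $\Lambda(\proj{0})$. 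Summing over $l$ cancels the $\frac{1}{N}$ prefactor and reproduces Eq.~\eqref{eq:condition_for_no_n_copy_extension} block by block. The main obstacle is precisely guessing this symmetrized $V$, whose structure mirrors the Bose-symmetry that Theorem~\ref{thm:optimality_of_sym_ext} already identified as essential; once $V$ is in hand, the remaining algebra is routine.
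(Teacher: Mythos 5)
Your proof is correct and rests on essentially the same key object as the paper's: your Dicke-type assignment is precisely the adjoint of the map $V=\ket{0}_1(\bra{0}_1\otimes\cdots\otimes\bra{0}_N)\otimes\Id_0+\sum_{i\geq1}\sum_{k}\ket{i}_1(\bra{i}_1\otimes\bra{0}_2\otimes\cdots\otimes\bra{0}_N)S_{1k}\otimes\Id_0$ that the paper conjugates the Choi operator with, and your block-by-block evaluation reproduces the paper's computation of $\Lambda_N(a_{ij})$. The only substantive difference is that the paper conjugates the Choi operator of an \emph{arbitrary} $N$-copy extension and establishes determinacy of the resulting blocks via membership of the $a_{ij}$ in $\mathrm{span}\{\proj{\psi}^{\otimes N}\}$, which yields the slightly stronger conclusion (remarked in Section~5) that the criterion applies even to extensions required to agree with $\Lambda$ only on pure-state inputs, whereas routing through Theorem~\ref{thm:optimality_of_sym_ext} as you do proves the theorem as stated but not that refinement.
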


The first term of Eq.~\eqref{eq:condition_for_no_n_copy_extension} is the Choi operator of $\Lambda$,
and the second term contains $\Lambda(\proj{0})$. 
We take a state $\ket{0}$ in the computational basis $\{ \ket{i} \}_0^{d_1-1}$ for convenience, 
 but this condition can be applied for any orthonormal basis $\{ \ket{k_i} \}$, namely,
if $\JL + (N-1)\sum_{i \neq 0} \Lambda(\proj{k_0}) \otimes \proj{k_i}$ is not positive,
$\Lambda$ is not $N$-copy implementable.

 If we require $\Lambda_N(\rho^{\otimes N}) = \Lambda(\rho)$ for any density operator $\rho$,
the optimal construction of $N$-copy extension is already given by Eq.~\eqref{eq:n_copy_extension}.
In this case, Theorem~\ref{thm:condition_for_no_n_copy_extension} can be proved by showing that the operator Eq.~\eqref{eq:condition_for_no_n_copy_extension} is the result of a certain positive map applied on the Choi operator of the $N$-copy extension Eq.~\eqref{eq:n_copy_extension}.
However, we will give a proof of Theorem~\ref{thm:condition_for_no_n_copy_extension}, 
which only uses constraints for $\mathrm{span}\{ \proj{\psi}^{\otimes N} \}$,
where
\begin{align}
\mathrm{span}\{ \proj{\psi}^{\otimes N} \} := 
\left\{ \sum_i c_i \proj{\psi_i}^{\otimes N} \middle| c_i \in \mathbb{C}, \ket{\psi_i} \in \mathbb{C}^{d_1} \right\}.
\end{align}
Since for $N \geq 2$, $\mathrm{span}\{ \proj{\psi}^{\otimes N} \} \subsetneq \mathrm{span}\{ \rho^{\otimes N} \} $ holds\footnote{For example, for $d=2,N=2$, $\Id^{\otimes 2}$ is in $\mathrm{span}\{ \rho^{\otimes 2} \}$, 
but not in $ \mathrm{span}\{ \proj{\psi}^{\otimes 2} \} $.
That is, for any state $ O \in \mathrm{span}\{ \proj{\psi}^{\otimes 2} \} $, it can be decomposed as $O = \sum_i c_i \proj{\psi_i}^{\otimes 2}$, and since $\bra{\psi^-} (\proj{ \psi_i }^{\otimes 2}) \ket{\psi^-} = 0$ for any $\ket{\psi_i}$, 
we obtain $ \bra{\psi^-} O \ket{\psi^-} = 0 \neq \bra{\psi^-} (\Id^{\otimes 2}) \ket{\psi^-} $ where $\ket{\psi^-}$ is the singlet state.
},
our proof indicates that Theorem~\ref{thm:condition_for_no_n_copy_extension} can be applied for a more general problem.
That is, even if we require $\Lambda_N(\rho^{\otimes N}) = \Lambda(\rho)$ for only pure state $\rho=\proj{\psi}$,
it is not $N$-copy implementable if the operator given by Eq.~(\ref{eq:condition_for_no_n_copy_extension}) is not positive.

\begin{proof}[Sketch of the proof]

Let $\JL_N$ be the Choi operator of $\Lambda_N$, an arbitrary $N$-copy extension of $\Lambda$.
Since $\Lambda_N$ is only defined on a subspace of $\Lin(\Hil_{1 \cdots N})$,
not all elements of $\JL_N$ are determined.
However, $\JL_N$ cannot be positive if there exists a positive map which maps $\JL_N$ to a non-positive operator.
We explicitly construct the positive map that maps $\JL_N$ to the operator given by Eq.~\eqref{eq:condition_for_no_n_copy_extension} in Appendix~\ref{ap:proof_of_no_n_copy_extension}.

\end{proof}

\section{Boundary and extremal conditions}\label{sec:boundary_extremal}

All positive maps that are non-boundary are finite-copy implementable 
as  shown in Corollary~\ref{cor:non_boundary_implementable}.
A natural question is that  whether the boundary condition can fully characterize finite-copy implementability.
We show that this is not the case.
Among boundary positive maps, there exist both finite-copy implementable ones and not implementable ones.
Meanwhile, boundary positive maps can be further classified in to extremal positive maps and non-extremal ones.
We conjecture that all extremal PNCP maps are not finite-copy implementable,
whereas there exist both finite-copy implementable and not implementable non-extremal positive maps.

We first see the examples for extremal positive maps.
For the 2-dimensional case, any extremal PNCP map is known to be the concatenation of the transposition map and a CP map with a single Kraus operator~\cite{positive_maps}.
Notice that if the Kraus operator is rank 0 or 1, the concatenation with the transposition map is CP,
so the Kraus operator is necessarily full rank, and the CP map is invertible.
Thus, since the transposition map is not finite-copy implementable~\cite{universal_not,certifying,miyazaki},
which also follows from Theorem~\ref{thm:condition_for_no_n_copy_extension},
any 2-dimensional extremal PNCP map is not finite-copy implementable.
For higher dimensional maps, there exist extremal positive  maps that cannot be decomposed into the transposition map and a CP map.
One such example is the Choi map~\cite{choimap,positive_maps}, which is a map between 3-dimensional matrices as
\begin{align}
\mathcal{C} : 
\begin{pmatrix}
x_{00} & x_{01} & x_{02} \\
x_{10} & x_{11} & x_{12} \\
x_{20} & x_{21} & x_{22}
\end{pmatrix}
\mapsto 
\begin{pmatrix}
x_{00} + x_{22} & -x_{01} & -x_{02} \\
-x_{10} & x_{00} + x_{11} & -x_{12} \\
-x_{20} & -x_{21} & x_{11} + x_{22}
\end{pmatrix}.
\end{align}

The Choi map is also shown to be not finite-copy implementable by applying Theorem~\ref{thm:condition_for_no_n_copy_extension}.
It is enough to show that the corresponding operator $\JL'_N$ is not positive.
Consider the principle minor of $\JL'_N$ given by
\begin{align}
&\begin{pmatrix}
\bra{00}\JL'_N\ket{00} & \bra{00}\JL'_N\ket{11} & \bra{00}\JL'_N\ket{22} \\
\bra{11}\JL'_N\ket{00} & \bra{11}\JL'_N\ket{11} & \bra{11}\JL'_N\ket{22} \\
\bra{22}\JL'_N\ket{00} & \bra{22}\JL'_N\ket{11} & \bra{22}\JL'_N\ket{22}
\end{pmatrix}
\\&\qquad=
\begin{pmatrix}
1 & -1 & -1 \\
-1 & 1 + (N-1) & -1 \\
-1 & -1 & 1
\end{pmatrix}.
\end{align}
The determinant of this matrix is $-4$ independent of $N$.
Thus, for any $N$, this principle minor of $\JL'_N$ is not positive,
and $\JL'_N$ is not positive.
Note that even if the determinant does not depend on $N$, the eigenvalues depend on $N$.
Especially, the smallest eigenvalue converges to 0 for $N \to \infty$, and the noisy version is always finite-copy implementable.

Next, we consider two non-extremal examples, one is finite-copy implementable and the other is not.
The implementable one is given by
$
\Lambda(\rho) := (1-p)\rho + \frac{1}{2} p\, \mathcal{C}(\rho),
$
where $p \in ( 6/7 , 8/9 )$.
The smallest eigenvalue of the Choi operator of $\Lambda$ is $-(7p-6)/2$,
and thus for $p > 6/7$, $\Lambda$ is not CP.
However, to construct the $2$-copy extension given by Eq.~\eqref{eq:n_copy_extension},
the minimum eigenvalue of the corresponding Choi operator is non-negative for $p<8/9$,
and thus $\Lambda$ is 2-copy implementable.
 The example of the finite-copy  non-implementable map is given by
$
\Lambda(\rho) := (1-p) \rho + p \rho^T
$
with $p \in (0,1)$.
We apply Theorem~\ref{thm:condition_for_no_n_copy_extension} for this map.
It is enough to show that the corresponding operator $\JL'_N$ is not positive.
Consider the principle minor given by
\begin{align}
\begin{pmatrix}
\bra{01} \JL'_N \ket{01} & \bra{01} \JL'_N \ket{10} \\
\bra{10} \JL'_N \ket{01} & \bra{10} \JL'_N \ket{10}
\end{pmatrix}
=
\begin{pmatrix}
0 & p \\
p & N-1
\end{pmatrix}.
\end{align}
This matrix is not positive for $p \neq 0$.
Thus, $\Lambda$ does not have a CP $N$-copy extension for any $N$,
and is not finite-copy implementable.

\section{Optimal state transposition}\label{sec:transposition}

As the transposition map is one of the most important  PNCP maps,
we consider an optimal approximation to implement  the map with $N$ copies of an input  state.
That is, we consider the implementation of a noisy version of  the transposition map
\begin{align}
T^\eta : \rho \mapsto (1-\eta) \rho^T + \eta (\Id / d) \Tr(\rho), \label{eq:noisy_transposition}
\end{align}
and show the amount of white noise $\eta$ required to implement the map when $N$ copies of an input are provided.
In this section, we will show a lower bound for $\eta$.
An upper bound for $\eta$, on the other hand, is given by Theorem~\ref{thm:implementable_eta}.
As the two bounds coincide in the asymptotic limit, 
we can see that the bound shown in Theorem~\ref{thm:implementable_eta} is also asymptotically optimal.

\begin{theorem}\label{thm:implementable_transpose}
For $d$-dimensional noisy transposition $T^\eta : \rho \mapsto (1-\eta) \rho^T + \eta (\Id / d ) \Tr(\rho)$,
it is $N$-copy implementable for $\eta \geq {d^2}/({N+d^2})$,
and is not $N$-copy implementable for $\eta < \min \{ d/(d+1)  ,d (d-1) / (N + d(d-1) ) \}$.
\end{theorem}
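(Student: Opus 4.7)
The proof splits into a sufficient and a necessary direction.

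For the sufficient direction ($\eta \geq d^2/(N+d^2)$), observe that the transposition is unital, i.e., $T(\Id/d)=\Id/d$, so the two noisy constructions in Eqs.~\eqref{eq:white_noise_robustness} and~\eqref{eq:depolarizing_first} coincide with $T^\eta$ from Eq.~\eqref{eq:noisy_transposition}. Applying Theorem~\ref{thm:implementable_eta} with $d_0=d_1=d$ then yields $N$-copy implementability whenever $\eta\geq d^2/(N+d^2)$.

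For the necessary direction I invoke Theorem~\ref{thm:optimality_of_sym_ext}: $T^\eta$ is $N$-copy implementable iff its canonical extension's Choi operator $\JL^{sym}_N$ is positive. Since the Choi operator of $T^\eta$ equals $(1-\eta)F+(\eta/d)\Id$ with $F$ the swap on $\Hil_0\otimes\Hil_1$, Eq.~\eqref{eq:n_copy_extension_choi} gives
\begin{align*}
\JL^{sym}_N \;=\; \frac{\eta}{d}\,\Id \;+\; \frac{1-\eta}{N}\sum_{i=1}^N F_{0i},
\end{align*}
where $F_{0i}$ is the swap between $\Hil_0$ and $\Hil_i$ extended by identity on the remaining factors. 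The problem thus reduces to determining $\lambda_{\min}\!\big(\sum_i F_{0i}\big)$ on $(\mathbb{C}^d)^{\otimes(N+1)}$.

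The core of the argument is the spectral identity $\lambda_{\min}\!\big(\sum_{i=1}^N F_{0i}\big)=-\mu$ with $\mu:=\min(d-1,N)$. I establish this through the Jucys--Murphy / Schur--Weyl picture: after identifying position $0$ with position $N+1$, $\sum_i F_{0i}$ coincides with the Jucys--Murphy element $X_{N+1}$ of $S_{N+1}$, which is diagonal in the Gelfand--Tsetlin basis with eigenvalues equal to the contents $c-r$ of the removable boxes of Young diagrams $\lambda\vdash N+1$ having at most $d$ rows. The extremal negative content is attained by the longest admissible first column: $-N$ via $\lambda=(1^{N+1})$ when $d\geq N+1$, and $-(d-1)$ via $\lambda=(N+2-d,1^{d-1})$ when $d\leq N+1$.

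Substituting $\lambda_{\min}\!\big(\sum_i F_{0i}\big)=-\mu$, positivity of $\JL^{sym}_N$ becomes $\eta \geq d\mu/(N+d\mu)$; a case analysis shows this threshold equals $\min\{d/(d+1),\,d(d-1)/(N+d(d-1))\}$, so $\eta$ strictly below it forces $\JL^{sym}_N\not\geq 0$ and precludes $N$-copy implementability. The main obstacle is the spectral identity itself: although the Jucys--Murphy viewpoint is clean, a self-contained alternative is to realise the eigenvalue $-\mu$ explicitly by a totally antisymmetric state on the first $\min(d,N+1)$ tensor factors, and to prove optimality through a Schur--Weyl decomposition of $(\mathbb{C}^d)^{\otimes(N+1)}$ that bounds the content by the number of available rows.
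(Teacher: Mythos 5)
Your proof is correct in both directions and, for the necessity part, takes a genuinely different route from the paper. The sufficiency step (unitality of transposition plus Theorem~\ref{thm:implementable_eta}) is exactly the paper's. For necessity, the paper works with $\JL_N=\tfrac{1}{N}\sum_i S_{0i}\otimes\Id_R$ and exhibits an explicit eigenvector with eigenvalue $-(d-1)/N$ for $N\ge d-1$, namely the signed superposition $\sum_{k_1<\cdots<k_{d-1}}c(\{k_i\})\,\ket{0}^{\otimes N-d+1}_R\ket{A_d}_{0k_1\cdots k_{d-1}}$ over all $(d-1)$-subsets of positions, with the coefficients $c(\{k_i\})$ chosen so that the cross terms $S_{0i}$ with $i\notin\{k_j\}$ cancel; this yields only the upper bound $\lambda_{\min}(\JL_N)\le -\min(d-1,N)/N$, and tightness is left to numerics. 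Your identification of $\sum_i F_{0i}$ with the Jucys--Murphy element $X_{N+1}$, combined with Schur--Weyl duality and the content formula for the Gelfand--Tsetlin spectrum, is correct and in fact stronger: it gives $\lambda_{\min}\bigl(\sum_i F_{0i}\bigr)=-\min(d-1,N)$ exactly, hence via Theorem~\ref{thm:optimality_of_sym_ext} the exact critical noise $\eta^*_N=d\mu/(N+d\mu)$ that the paper only observes numerically; your case analysis matching this to $\min\{d/(d+1),\,d(d-1)/(N+d(d-1))\}$ is also right. The trade-off is that you import nontrivial representation theory where the paper stays elementary.

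One caveat: the ``self-contained alternative'' you sketch at the end does not work as stated. A single totally antisymmetric state on the first $\min(d,N+1)$ tensor factors is not an eigenvector of $\sum_iF_{0i}$ once $N+1>d$, because the swaps $F_{0i}$ with $i$ outside the antisymmetric block neither annihilate it nor reproduce it; for example, with $d=2$, $N=2$ the state $\ket{A_2}_{01}\ket{0}_2$ gives $\bra{\Psi}(F_{01}+F_{02})\ket{\Psi}=-1+\tfrac{1}{2}=-\tfrac{1}{2}$, whereas the true minimum is $-1$. This is precisely why the paper's eigenvector must superpose antisymmetric blocks over all position subsets with the signed coefficients $c(\{k_i\})$. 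Since your main argument does not rely on this aside, the proof stands; the aside should be dropped or replaced by the paper's Appendix~\ref{ap:proof_of_implementable_transpose} construction.
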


This result shows that for $d$-dimensional state transposition,
if we assume the approximation is given as Eq.~\eqref{eq:noisy_transposition},
multiple copies of an input does not help unless the number of copies is more than $d-1$.
Numerical calculations shows the bound $\min \{ d/(d+1) , d(d-1) / (N + d(d-1) ) \}$ seems to be 
the critical amount of the noise required to make the noisy transposition  to be $N$-copy implementable.

\begin{proof}[Sketch of the proof]

Since the Choi operator of state transposition is the swap operator $S$,
the Choi operator of the optimal $N$-copy extension is given as
\begin{align}
\JL_N = \frac{1}{N} \sum_{i=1}^N S_{0i} \otimes \Id_{R}.
\end{align}
We show that the minimum eigenvalue of $\JL_N$ is bounded as
\begin{align}
\lambda_{\min} (\JL_N) \leq
\begin{cases}
-1 & N \leq d-2 \\
-(d-1)/N & N \geq d-1
\end{cases}. \label{eq:min_eigen_optimal_transpose}
\end{align}
The $N=1$ case is easy to prove as $\JL_1 = S_{01}$, which is just the swap operator.
If $T^\eta$ is $N$-copy implementable, it is also $N'$-copy implementable for $N' > N$,
thus the minimum eigenvalue of $\JL_N$ is non-decreasing in $N$.
Therefore, for $N \leq d-2$ cases, it is enough to show that the minimum eigenvalue of $\JL_{d-1}$ is $-1$.
We will only show for $N \geq d-1$ cases.
We give an explicit construction of (unnormalized) eigenvector $\ket{\Psi_N}$ 
which has the eigenvalues shown in Eq.~\eqref{eq:min_eigen_optimal_transpose} in Appendix~\ref{ap:proof_of_implementable_transpose}.

\end{proof}

\section{Characterization of positive maps based on $N$-copy implementability}\label{sec:characterization}

Positive maps provide a useful mathematical tool to understand and detect entanglement. 
In particular, positive maps have a one-to-one relation with entanglement witness 
and every PNCP map  can be used to certify the entanglement of some quantum state~\cite{ppt2}. 
The problem of deciding whether a linear map is positive is known to be computationally hard~\cite{gurvits04} 
and methods to characterize positive maps that does not follow directly from the definition date back to pioneering studies by Jamio\l kowski~\cite{jamiolkowski72}.

Although it is not the main focus of this paper, 
our results imply a characterization of strictly positive maps\footnote{
A linear map $\Lambda:\mathcal{L}(\mathcal{H}_1)\to \mathcal{L}(\mathcal{H}_0) $ is strictly positive 
if it transforms positive operators into strictly positive operators. 
More precisely if $\rho \in \mathcal{L}(\mathcal{H}_1)$ respects $\bra{\psi_1}\rho\ket{\psi_1} \geq 0$ for every $\ket{\psi_1}\in \mathcal{H}_1$, 
strictly positive maps $\Lambda$ are the ones which respect $\bra{\psi_0} \Lambda(\rho) \ket{\psi_0}>0$ for every $\ket{\psi_0}\in \mathcal{H}_0$. 
It also follows straightforwardly a linear map is strictly positive map if and only if it is in the interior of the set of positive maps.
}.
Clearly, if a map is not  positive, it is not $N$-copy implementable 
since that would necessarily transform a positive operator into a non-positive one. 
Also, Corollary~\ref{cor:non_boundary_implementable} states that all strictly positive maps are $N$-copy implementable for finite $N$. 
Since Theorem~\ref{thm:optimality_of_sym_ext} can be used to decide whether 
a positive map has a CP $N$-copy extension by checking if the minimum eigenvalue of a linear operator is positive, 
we have a hierarchy that can always decide if a map is strictly positive. 
This hierarchy also quantifies the ``non-complete-positiveness'' of positive maps
by the number of copies, $N$, that is required to implement it via a CP-map. 
Also, Theorems~\ref{thm:n_copy_extension_add_white_noise} and~\ref{thm:implementable_eta} provide universal bounds 
on the number of copies $N$ that only depend on the dimension of the input and output space.

We note that the operators which have a symmetrized structure as in Eq.~\eqref{eq:n_copy_extension_choi} have been investigated~\cite{universal_not,cloning_werner,separability,math_entanglement,lancien2016k},
and a characterization of strict positive maps based on minimum eigenvalues of such operators has also appeared previously in  Section X of~\cite{separability}. 
In~\cite{separability}, the authors exploit the relation between positive maps and entanglement witness to present an SDP hierarchy based on the symmetric extension of quantum states~\cite{separability_prl}. 
Theorem 2 establishes that the operator studied in Section X of Ref.~\cite{separability} happens to characterize $N$-copy implementability of a given positive map. 
Thus, Theorem 3 and 4 can also be used to quantitatively show the convergence of the hierarchy in Ref.~\cite{separability}.
On the other hand, combining Theorem~3 of Ref.~\cite{separability} with our Theorem 2, we can obtain Corollary~\ref{cor:non_boundary_implementable}.
We emphasize that the concept of $N$-copy implementability does not appear in previous researches.

\section{Conclusion}

We have introduced the concept of $N$-copy extension to analyze the implementability of positive  but not completely positive (PNCP) maps by consuming multiple copies of an input state.  
We gave a universal construction of $N$-copy extension on the whole space.
This construction is shown to be  optimal in the sense that
a map is $N$-copy implementable if and only if it is implementable with this construction.
This result implies that the $N$-copy implementability is verifiable 
by calculating the smallest eigenvalue of the Choi operator of this $N$-copy extension.
For a simpler verification of $N$-copy implementability,
we also gave two sufficient conditions and one necessary condition.
Moreover, we showed all non-boundary positive maps are finite-copy implementable.
We finally analyzed the $N$-copy implementability of  a noisy version of the transposition map,
and  gave an asymptotically optimal value of the noise required to add  for the noisy version of the map to be $N$-copy implementable. 
The concept of $N$-copy implementability also gives a quantitative characterization for positive maps 
which provides an operational meaning to the characterization of strictly positive maps presented in Section X of~\cite{separability}.

We have shown that non-boundary positive maps are always finite-copy implementable,
but the behavior of boundary positive maps  is still not fully understood.
We  conjecture that any extremal PNCP maps requires infinite copies of an input to be implementable.
The existence of any simpler necessary or sufficient condition for finite-copy implementability remains open.

\section*{Acknowledgements}

We thank S.~Nakayama for helpful discussions.
This work was supported by ALPS and JSPS KAKENHI (Grant No. 15H01677, 16H01050, 17H01694, 18H04286, 16F16769 and 18K13467), and the Q-LEAP project of MEXT, Japan.

\bibliography{refs}

\appendix

\section{Proof of Theorem~\ref{thm:condition_for_no_n_copy_extension}}\label{ap:proof_of_no_n_copy_extension}

Let $\JL_N$ be the Choi operator of $\Lambda_N$, an arbitrary $N$-copy extension of $\Lambda$.
Since $\Lambda_N$ is only defined on a subspace of $\Lin(\Hil_{1 \cdots N})$,
not all elements of $\JL_N$ are determined.
However, $\JL_N$ cannot be positive if there exists a positive map which maps $\JL_N$ to a non-positive operator.

Let $\Phi : \Lin( \Hil_{0 1  \cdots  N} ) \to \Lin( \Hil_{01})$ be a positive map defined as $\Phi(\rho) = V \rho V^\dag$ 
where $V \in \Lin( \Hil_{0 1  \cdots  N} , \Hil_{0 1} )$ is
\begin{align}
V &= \ket{0}_1 (\bra{0}_1 \otimes \bra{0}_2 \otimes \cdots \otimes \bra{0}_{N}) \otimes \Id_0 \notag\\
&\quad + \sum_{i=1}^{d_1 - 1} \sum_{k=1}^{N} \ket{i}_1 (\bra{i}_1 \otimes \bra{0}_2 \otimes \cdots \otimes \bra{0}_{N}) S_{1k}
\otimes \Id_0,
\end{align}
 $S_{kl}$ is the swap operator defined as
$S_{kl} \ket{\psi}_k \ket{\phi}_l = \ket{\phi}_k \ket{\psi}_l$ for all $\ket{\psi},\ket{\phi} \in \mathbb{C}^{d_1}$,
and  subscripts $k$ and $l$ denote Hilbert space $\Hil_k$ and $\Hil_l$, respectively.

This map $\Phi$ maps $\JL_N$ to the operator, 
\begin{align}
\JL'_N :=& \Phi(\JL_N ) \\
=&  \ketbra{0}{0} \otimes \Lambda_N( a_{00} ) 
+ \sum_{j=1}^{d_1-1}  \ketbra{0}{j} \otimes \Lambda_N( a_{0j} )  \notag\\
& +\sum_{i=1}^{d_1-1}  \ketbra{i}{0} \otimes \Lambda_N( a_{i0} ) 
+ \sum_{i,j=1}^{d_1-1}  \ketbra{i}{j} \otimes \Lambda_N( a_{ij} ) 
\end{align}
with
\begin{align}
a_{00} &= \proj{0}^{\otimes N}, \label{aij_1}\\
a_{0j} &= \sum_{k=1}^{N} S_{1k} (\ketbra{0}{j}_1 \otimes \proj{0}^{\otimes (N-1)}_R) S^\dag_{1k}, \\
a_{i0} &= \sum_{k=1}^{N} S_{1k} (\ketbra{i}{0}_1 \otimes \proj{0}^{\otimes (N-1)}_R) S^\dag_{1k}, \\
a_{ij} &= \sum_{k=1}^{N} S_{1k} (\ket{i}_1 \otimes \ket{0}^{\otimes (N-1)}_R) 
\sum_{l=1}^{N} (\bra{j}_1 \otimes \bra{0}^{\otimes (N-1)}_R)S^\dag_{1l}, \label{aij_4}
\end{align}
and $\{ a_{ij} \}$ are permutation invariant on the Hilbert space $\Hil_1 \otimes \ldots \otimes \Hil_N$.

We prove that the operators $\{ a_{ij} \} $ defined by Eq.~\eqref{aij_1} - Eq.~\eqref{aij_4} 
satisfy $a_{ij} \in \mathrm{span}\{ \proj{\psi}^{\otimes N} \}$ by construction.
As $a_{00}$ is a density operator,
$a_{00} \in \mathrm{span}\{ \proj{\psi}^{\otimes N} \}$ is trivial.
Since only the linear span is considered, normalization factors are ignored for simplicity.

We first show that for any $\ket{\phi_1}$ and $\ket{\phi_2}$, 
$\ketbra{\phi_1}{\phi_2}^{\otimes N} \in \mathrm{span}\{ \proj{\psi}^{\otimes N} \}$,
which is also shown in \cite{church_of_sym}.
Consider a vector defined by $\ket{\psi_\theta} = \ket{\phi_1} + e^{i\theta} \ket{\phi_2}$, 
$\proj{\psi_\theta}$ is a density operator and
$\proj{\psi_\theta}^{\otimes N} \in \mathrm{span}\{ \proj{\psi}^{\otimes N} \}$. 
Since $\int_0^{2\pi} d\theta \, e^{ik\theta} = 0$ unless $k = 0$,
we obtain
\begin{align}
\ketbra{\phi_1}{\phi_2}^{\otimes N} =& \frac{1}{2\pi} \int_0^{2\pi} d\theta\, e^{iN\theta} \proj{\psi_\theta}^{\otimes N} \\
&\in \mathrm{span}\{ \proj{\psi}^{\otimes N} \}.
\end{align}

To prove $a_{i0} \in \mathrm{span}\{ \proj{\psi}^{\otimes N} \}$ for $i \neq 0$,
let $\ket{\psi_\theta} = \ket{0} + e^{i\theta} \ket{i}$,
then $\ketbra{\psi_\theta}{0}^{\otimes N} \in \mathrm{span}\{ \proj{\psi}^{\otimes N} \}$,
and thus
\begin{align}
a_{i0} = \frac{1}{2\pi} \int_0^{2\pi} d\theta\, e^{-i\theta} \ketbra{\psi_\theta}{0}^{\otimes N} \in \mathrm{span}\{ \proj{\psi}^{\otimes N} \}.
\end{align}
$a_{0j} \in \mathrm{span}\{ \proj{\psi}^{\otimes N} \}$ is also proved in the similar way.

To prove $a_{ij} \in \mathrm{span}\{ \proj{\psi}^{\otimes N} \}$ for $i,j \neq 0$, 
let $\ket{\psi_{1,\theta} } = \ket{0} + e^{i\theta} \ket{i}$ and $\ket{\psi_{2,\phi} } = \ket{0} + e^{-i\phi} \ket{j}$,
we have $\ketbra{\psi_{1,\theta}}{\psi_{2,\phi}}^{\otimes N} \in \mathrm{span}\{ \proj{\psi}^{\otimes N} \}$.
Since 
\begin{align}
\ketbra{\psi_{1,\theta}}{\psi_{2,\phi}} = \proj{0} + e^{i\theta}\ketbra{i}{0} + e^{i\phi}\ketbra{0}{j} + e^{i(\theta+\phi)}\ketbra{i}{j},
\end{align}
we obtain 
\begin{align}
a_{ij} &= \frac{1}{(2\pi)^2} \int_0^{2\pi} d\theta \int_0^{2\pi} d\phi\, e^{-i(\theta+\phi)} \ketbra{\psi_{1,\theta}}{\psi_{2,\phi}}^{\otimes N} \\
&\qquad \in \mathrm{span}\{ \proj{\psi}^{\otimes N} \}.
\end{align}

Since $\{ a_{ij} \} \in \mathrm{span}\{ \proj{\psi}^{\otimes N} \}$, 
 $\Lambda_N(a_{ij})$ are calculable with partial traces of $\{ a_{ij} \} $ as Eq.~\eqref{eq:calc_by_partial_trace}, that is,
\begin{align}
\Lambda_N(a_{00}) &= \Lambda(\proj{0}) , \\
\Lambda_N(a_{0j}) &= \Lambda( \ketbra{0}{j} ), \\
\Lambda_N(a_{i0}) &= \Lambda( \ketbra{i}{0} ), \\
\Lambda_N(a_{ij}) &= \Lambda( \ketbra{i}{j} ) + \delta_{ij} (N-1) \Lambda(\proj{0})
\end{align}
for $i,j \neq 0$.
Thus, we obtain
\begin{align}
\JL'_N &=    \ketbra{0}{0} \otimes \Lambda( \proj{0} )
+ \sum_{j=1}^{d_1-1} \ketbra{0}{j} \otimes \Lambda( \ketbra{0}{j} )  \notag\\
&\quad +\sum_{i=1}^{d_1-1}  \ketbra{i}{0} \otimes \Lambda( \ketbra{i}{0} ) 
+ \sum_{i,j=1}^{d_1-1}  \ketbra{i}{j}  \otimes \Lambda( \ketbra{i}{j} ) \notag\\
&\quad + (N-1) \sum_{i,j=1}^{d_1-1}  \delta_{ij} \ketbra{i}{j} \otimes \Lambda(\proj{0}) \\
&= \JL + (N-1) \sum_{i=1}^{d_1-1} \proj{i} \otimes \Lambda(\proj{0}) .
\end{align}
Therefore, 
if $\JL'_N = \JL + (N-1) \sum_{i=1}^{d_1-1} \proj{i} \otimes \Lambda(\proj{0}) $ is not positive,
$\JL_N$ cannot be positive,
that is, no CP $N$-copy extension of $\Lambda$ exists.

\section{Proof of Theorem~\ref{thm:implementable_transpose}}\label{ap:proof_of_implementable_transpose}

Since the Choi operator of state transposition is the swap operator $S$,
the Choi operator of the optimal $N$-copy extension is given as
\begin{align}
\JL_N = \frac{1}{N} \sum_{i=1}^N S_{0i} \otimes \Id_{R}.
\end{align}
We show that the minimum eigenvalue of $\JL_N$ is bounded as
\begin{align}
\lambda_{\min} (\JL_N) \leq
\begin{cases}
-1 & N \leq d-2 \\
-(d-1)/N & N \geq d-1
\end{cases}. \label{eqap:min_eigen_optimal_transpose}
\end{align}
The $N=1$ case is easy to prove as $\JL_1 = S_{01}$, which is just the swap operator.
If $T^\eta$ is $N$-copy implementable, it is also $N'$-copy implementable for $N' > N$.
Thus, the minimum $\eta$ that allows $T^\eta$ to be $N$-copy implementable, $\eta^*_N$, is non-increasing in $N$.
Since the $N$-copy implementability of $T^\eta$ is equivalent to the positivity of $(1-\eta) \JL_N + \eta \Id / d$,
$\lambda_{\min} (\JL_N) = -\eta^*_N / d(1-\eta^*_N) $ holds.
Hence $\lambda_{\min} (\JL_N)$ is non-decreasing in $N$.
Therefore, for $N \leq d-2$ cases, it is enough to show that the minimum eigenvalue of $\JL_{d-1}$ is $-1$.
We will only show for $N \geq d-1$ cases.
We give an explicit construction of (unnormalized) eigenvector $\ket{\Psi_N}$ 
which has the eigenvalues shown in Eq.~\eqref{eqap:min_eigen_optimal_transpose}. 
Let $\ket{A_d}$ be the $d$-dimensional totally anti-symmetric state
\begin{align}
\ket{A_d}_{i_0 i_1\cdots i_{d-1}} := \sum_{\sigma \in \mathcal{S}_d} 
\frac{\mathrm{sgn}(\sigma)}{\sqrt{d!}} \ket{\sigma_0}_{i_0} \ket{\sigma_1}_{i_1} \cdots \ket{\sigma_{d-1}}_{i_{d-1}},
\end{align}
and $k_1, \ldots , k_{d-1}$ be integers from $1$ to $N$ which are used to denote the Hilbert spaces.
The eigenvector $\ket{\Psi_N}$ is defined as
\begin{align}
\ket{\Psi_N} := \sum_{k_1 < k_2 < \cdots < k_{d-1}} c(\{k_i\}) \ket{0}^{\otimes N-d+1}_{R} \ket{A_d}_{0 k_1 k_2 \cdots k_{d-1}},
\end{align}
where the coefficients $c(\{k_i\})$ are defined as 
\begin{align}
c(\{k_i\}) :=
\begin{cases}
1 & d\mbox{ is even} \\
\sum_{i=1}^{d-1} (-1)^i k_i & d\mbox{ is odd}
\end{cases},
\end{align}
and the state $\ket{0}^{\otimes N-d+1}_{R}$ is an arbitrary $N-d+1$ qudit state with subscript $R$ denoting that 
this state is in the rest of Hilbert spaces other than $\Hil_0, \Hil_{k_1} ,\ldots, \Hil_{k_{d-1}}$.

Now $\JL_N \ket{\Psi_N}$ can be grouped into two groups as
\begin{align}
&\JL_N \ket{\Psi_N} \notag \\
=& \frac{1}{N} \sum_{k_1 < \cdots < k_{d-1}} \sum_{i \in \{k_i \}} S_{0i} c(\{k_i\}) \ket{0}^{\otimes N-d+1}_{R} \ket{A_d}_{0 k_1 k_2 \cdots k_{d-1}} \notag\\
+& \frac{1}{N} \sum_{k_1 < \cdots < k_{d-1}} \sum_{i \notin \{k_i \}} S_{0i} c(\{k_i\}) \ket{0}^{\otimes N-d+1}_{R} \ket{A_d}_{0 k_1 k_2 \cdots k_{d-1}},\label{eqap:L_N_psi_N}
\end{align}
 depending on $i \in \{k_i \}$ or not.
Since
\begin{align}
&S_{0i} \ket{0}^{\otimes N-d+1}_{R} \ket{A_d}_{0 k_1 k_2 \cdots k_{d-1}} \notag \\
=& 
\begin{cases}
- \ket{0}^{\otimes N-d+1}_{R} \ket{A_d}_{0 k_1 k_2 \cdots k_{d-1}} & i \in \{k_i \} \\
\ket{0}^{\otimes N-d+1}_{R} \ket{A_d}_{i k_1 k_2 \cdots k_{d-1}}  & i \notin \{k_i \}
\end{cases},
\end{align}
the first term of Eq.~\eqref{eqap:L_N_psi_N} becomes
\begin{align}
&\frac{1}{N} \sum_{k_1 < \cdots < k_{d-1}} \sum_{i \in \{k_i \}} S_{0i} c(\{k_i\}) \ket{0}^{\otimes N-d+1}_{R} \ket{A_d}_{0 k_1 k_2 \cdots k_{d-1}} \notag\\
=& - \frac{1}{N} \sum_{k_1 < \cdots < k_{d-1}} \sum_{i \in \{k_i \}} c(\{k_i\}) \ket{0}^{\otimes N-d+1}_{R} \ket{A_d}_{0 k_1 k_2 \cdots k_{d-1}} \notag\\
=& - \frac{d-1}{N} \sum_{k_1 < \cdots < k_{d-1}} c(\{k_i\}) \ket{0}^{\otimes N-d+1}_{R} \ket{A_d}_{0 k_1 k_2 \cdots k_{d-1}} \notag\\
=& - \frac{d-1}{N} \ket{\Psi_N},
\end{align}
where the second equation holds as there are $d-1$ elements in $\{ k_i \}$.
The remaining is to show that the second term of Eq.~\eqref{eqap:L_N_psi_N} vanishes. Here
\begin{align}
&\frac{1}{N} \sum_{k_1 < \cdots < k_{d-1}} \sum_{i \notin \{k_i \}} S_{0i} c(\{k_i\}) \ket{0}^{\otimes N-d+1}_{R} \ket{A_d}_{0 k_1 k_2 \cdots k_{d-1}} \notag\\
=& \frac{1}{N} \sum_{k_1 < \cdots < k_{d-1}} \sum_{i \notin \{k_i \}} c(\{k_i\}) \ket{0}^{\otimes N-d+1}_{R} \ket{A_d}_{i k_1 k_2 \cdots k_{d-1}}, \label{eqap:off_diagonal_iK}
\end{align}
and the summation is taken over the whole set of $\{ i, k_1, k_2 ,\ldots , k_{d-1} \}$ 
with conditions $k_1 < \cdots < k_{d-1}$ and $i \neq k_1, \ldots, k_{d-1}$.
Now consider another set $\{ l_0, l_1, \ldots, l_{d-1} \}$, 
if we sum over the whole set with the condition $l_0 < l_1 < \cdots < l_{d-1}$,
the summation does not match the original one, as in the original summation, 
the only requirement for $i$ is that $i$ is different from the others.
Thus, we add an additional parameter $ 0 \leq j \leq d-1$, 
which is used to specify the number that is not ordered as $i$ in the original summation.
For a fixed parameters $\{ i, k_1, k_2 ,\ldots , k_{d-1} \}$, the corresponding parameters $\{ l_0, l_1, \ldots, l_{d-1}, j \}$
are constructed as follows.
The numbers $l_0, l_1, \ldots, l_{d-1}$ are equal to the numbers $ i, k_1, k_2 ,\ldots , k_{d-1}$ rearranged in an increasing order,
and $j$ is defined such that $i$ is the $(j+1)$-th smallest number in $\{ i, k_1, k_2 ,\ldots , k_{d-1} \}$.
Then we can rewrite Eq.~\eqref{eqap:off_diagonal_iK} as
\begin{align}
\sum_{l_0 < l_1 < \cdots < l_{d-1}} \sum_{j=0}^{d-1} c( \{ l_i \}, j) (-1)^j \ket{0}^{\otimes N-d+1}_{R} \ket{A_d}_{ l_0 l_1 \cdots l_{d-1} }
\end{align}
where the coefficient $(-1)^j$ appears because of the property
$\ket{A_d}_{l_j l_0\cdots l_{j-1} l_{j+1} \cdots l_{d-1}} = (-1)^j \ket{A_d}_{l_0\cdots l_{j-1} l_j l_{j+1} \cdots l_{d-1}}$,
and the coefficient $c( \{ l_i \}, j)$ is equal to $c(\{k_i\})$, which can be written as 
\begin{align}
c( \{ l_i \}, j) = 
\begin{cases}
1 & d\mbox{ is even} \\
\sum_{i < j} (-1)^i l_i + \sum_{i > j} (-1)^{i+1} l_i & d\mbox{ is odd}
\end{cases}. \notag
\end{align}
We show that $\sum_{j=0}^{d-1} c( \{ l_i \}, j) (-1)^j = 0$ as follows.
If $d$ is even, $\sum_{j=0}^{d-1} (-1)^j = 0$ is trivial,
since the summation of $(-1)^j$ over an even number of consecutive integers $j$ is 0.
If $d$ is odd,
\begin{align}
&\sum_{j=0}^{d-1} c( \{ l_i \}, j) (-1)^j \\
&=\sum_{j=0}^{d-1} (-1)^j \left( \sum_{i<j} (-1)^i l_i + \sum_{i>j} (-1)^{i+1} l_i \right)  \\
&= \sum_{i=0}^{d-1} \left( \sum_{j > i} (-1)^{i+j} l_i + \sum_{j < i} (-1)^{i+j+1} l_i \right) \\
&= \sum_{i=0}^{d-1} l_i \left( \sum_{j > i} (-1)^{i+j} - \sum_{j<i} (-1)^{i+j} \right),
\end{align}
where the order of summation is changed in the second equation and we take the summation on $j$ first.
For each $i$, two summations ($j > i$ and $j < i$) consist of an even number of terms in total because $d$ is odd and we take summation only for $ j \neq i$.
If both summations contain an even number of terms, both of them vanish, and the total is 0.
If both summations contain an odd number of terms, 
both of them are equal to $1$ or $-1$, and the total is still 0.
Thus, in both cases, the sum vanishes.

Summing up these results, we obtain $\JL_N \ket{\Psi_N} = - \frac{d-1}{N} \ket{\Psi_N}$.
This is an upper bound for the minimum eigenvalue of $\JL_N$.

\end{document}